\title{High-Quality Synthesis Against Stochastic Environments} 
 \newtheorem{theorem}{Theorem}[section]
\newtheorem{remark}{Remark}
\newtheorem{example}{Example}
 \newtheorem{lemma}[theorem]{Lemma}
 \def\squarebox#1{\hbox to #1{\hfill\vbox to #1{\vfill}}}
 \newcommand{\qed}{\hspace*{\fill}
 	\vbox{\hrule\hbox{\vrule\squarebox{.667em}\vrule}\hrule}\smallskip}
 \newenvironment{proof}{\begin{trivlist}
 		\item[\hspace{\labelsep}{\bf\noindent Proof: }]
 	}{\qed\end{trivlist}}
\newcommand{\shaull}[1]{}
\newcommand{\short}[1]{}
\newcommand{\set}[1]{{\{#1\}}}
\newcommand{\Nat}{\mathbbm{N}}
\newcommand{\zug}[1]{\langle #1  \rangle}
\newcommand{\val}{{\rm val}}
\newcommand{\sem}[1]{[\![#1]\!]} 
\mathchardef\mhyphen="2D
\newcommand{\True}{\mathtt{True}}
\newcommand{\False}{\mathtt{False}}
\newcommand{\LTL}{{\ensuremath{\rm LTL}}\xspace}
\newcommand{\Next}{\mathsf{X}}
\newcommand{\Ev}{\mathsf{F}}
\newcommand{\Alw}{\mathsf{G}}
\newcommand{\Until}{\mathsf{U}}
\newcommand{\T}{{\mathcal T}}
\newcommand{\DPW}{\mbox{\rm DPW}\xspace}
\newcommand{\DPWs}{\mbox{\rm DPWs}\xspace}
\newcommand{\stam}[1]{}
\newcommand{\B}{{\cal B}}
\newcommand{\A}{{\cal A}}
\newcommand{\M}{{\cal M}}
\newcommand{\F}{{\cal F}}
\newcommand{\EE}{\mathbb{E}}
\newcommand{\RR}{\mathbb{R}}
\newcommand{\Competence}{\triangledown}
\newcommand{\factorU}{\Competence}
\newcommand{\avg}[1]{\oplus_{#1}}
\renewcommand{\phi}{\varphi}
\newcommand{\maxs}[1]{\max\set{#1}}
\newcommand{\Inf}{{\rm inf}}
\newcommand{\ST}{ : \:}
\newcommand{\SecondST}{ \: | \:} 
\newcommand{\Fu}[2]{\ensuremath {{#1}{[#2]}}\xspace}
\newcommand{\FLTL}{\Fu{\LTL}{\F}}
\newcommand{\LTLF}{\FLTL}
\newcommand{\Act}{A}
\newcommand{\MDPProb}{{\rm P}}
\newcommand{\MDPre}{{\gamma}}
\newcommand{\tAP}{{2^{I\cup O}}}
\newcommand{\tAPo}{{{(2^{I\cup O})}^\omega}}
\newcommand{\tIN}{{2^I}}
\newcommand{\tOUT}{{2^O}}
\newcommand{\tINo}{{{(2^I)}^\omega}}
\newcommand{\tOUTo}{{{(2^O)}^\omega}}
\newcommand{\baderg}{{\mathscr{R}}}
\newcommand{\synprob}{{SHQSyn}\xspace}
\begin{document}





\author{Shaull Almagor and Orna Kupferman}
\date{The Hebrew University, Jerusalem, Israel.}

\maketitle

\begin{abstract}
In the classical synthesis problem, we are given a linear temporal logic (\LTL) formula $\psi$ over sets of input and output signals, and we synthesize a transducer that realizes $\psi$:  with every sequence of input signals, the transducer associates a sequence of output signals so that the generated computation satisfies $\psi$.
One weakness of automated synthesis in practice is that it pays no attention to the quality of the synthesized system. Indeed, the classical setting is Boolean: a computation satisfies a specification or does not satisfy it. Accordingly, while the synthesized system is correct, there is no guarantee about its quality. 
In recent years, researchers have considered extensions of the classical Boolean setting to a quantitative one. The logic $\FLTL$ is a multi-valued logic that augments LTL with quality operators. 
The satisfaction value of an $\FLTL$ formula is a real value in $[0,1]$, where the higher the value is, the higher is the quality in which the computation satisfies the specification. 

Decision problems for LTL become search or optimization problems for $\FLTL$. In particular, in the synthesis problem, the goal is to generate a transducer that satisfies the specification in the highest possible quality. 
Previous work considered the worst-case setting, where the goal is to maximize the quality of the computation with the minimal quality. We introduce and solve the stochastic setting, where the goal is to generate a transducer that maximizes the expected quality of a computation, subject to a given distribution of the input signals. Thus, rather than being hostile, the environment is assumed to be probabilistic, which corresponds to many realistic settings. 
We show that the problem is 2EXPTIME-complete, like classical LTL synthesis. The complexity stays 2EXPTIME also in two extensions we consider: one that maximizes the expected quality while guaranteeing that the minimal quality is, with probability $1$, above a given threshold, and one that allows assumptions on the environment.
\end{abstract}

\section{Introduction}
\label{sec:intro}
{\em Synthesis\/} is the automated construction of a system from its specification: given a linear temporal logic (LTL)  formula $\psi$ over sets $I$ and $O$ of input and output signals, we synthesize a finite-state system that {\em realizes\/} $\psi$ \cite{Chu63,PR89a}. At each moment in time, the system reads a truth assignment, generated by the environment, to the signals in $I$, and it generates a truth assignment to the signals in $O$. Thus, with every sequence of inputs, the system associates a sequence of outputs. 
The system realizes $\psi$ if all the computations that are generated by the interaction satisfy $\psi$. 

One weakness of automated synthesis in practice is that it pays no attention to the quality of the synthesized system. Indeed, the classical setting is Boolean: a computation satisfies a specification or does not satisfy it. Accordingly, while the synthesized system is correct, there is no guarantee about its quality. This is a crucial drawback, as designers would be willing to give-up manual design only if automated-synthesis algorithms return systems of comparable quality. 
In recent years, researchers have considered several extensions and variants of the classical setting of synthesis.
One class of extensions stays in the Boolean setting. For example, in practice we can often make assumptions on the behavior of the environment. An assumption may be direct, say given by an LTL formula that restricts the set of possible sequences of inputs \cite{CHJ08}, or conceptual, say rationality from the side of the environment, which may have its own objectives \cite{FKL10}, or a bound on the size of the environment and/or the generated system \cite{SF07,KLVY11}. Another class of extensions moves to a quantitative setting, where a specification may have different satisfaction values in different systems. For example, in \cite{BCHJ09}, the input to the synthesis problem includes also Mealy machines that grade different realizing systems. As another example, in~\cite{ABK13}, the specification formalism is the multi-valued logic \FLTL, which augments LTL with quality operators. The satisfaction value of an $\FLTL$ formula is a real value in $[0,1]$, where the higher the value is, the higher is the quality in which the computation satisfies the specification. The synthesis algorithm then seeks systems of the highest possible quality. A quantitative approach can be taken also with Boolean specifications and involves a probabilistic view: the environment is assumed to generate input sequences according to some probability distribution. Then, instead of requiring the system to satisfy the specification in all computations generated by the environment, we measure the probability with which this happens~\cite{KP13}.

Combining the multi-valued approach with the probabilistic one has led to the use of Markov Decision Processes (MDPs). Indeed, MDPs are a clean mathematical model that allows the analysis of quantitative objectives in a probabilistic environment. The intricacy of MDPs has led, in turn, to a plethora of works on synthesis with various constraints and reward models (e.g.~\cite{BKKM14,CD11,CHJ04,CKK15,CR15}). The starting point of these works  is the MDPs. This is puzzling,  as while MDPs offer a very clean framework for the analysis, they do not serve as a specification formalism. Thus, the crucial step of actually obtaining the MDPs is missing.
\stam{Naturally, researchers seek to combine the two quantitative approaches. That is, realizing a quantitative specification against a probabilistic environment. This search led to the use of Markov Decision Processes (MDPs) as a very successful paradigm. Indeed, MDPs are a clean mathematical model that allows the analysis of quantitative rewards and a probabilistic environment. The intricacy of MDPs led, in turn, to a plethora of works on synthesis with various constraints and reward models (e.g.~\cite{BKKM14,CD11,CHJ04,CKK15,CR15}). However, while MDPs serve well as a technical tool for algorithmic use, they are unfit as a specification formalism for practical use, much like automata serve as a technical tool in the Boolean setting, but are not typically used as a specification formalism. Thus, quantitative specification formalisms need to catch up to the existing technical framework before these methods can be applied in practice.
}

In this work, we consider {\em stochastic high-quality synthesis}, 
which combines the multi-valued approach with the probabilistic one. We build on known techniques for MDPs, and still keep the specification formalism accessible to designers. 
The specification is given by an \FLTL formula, the environment is assumed to be probabilistic, and we seek a system that maximizes the expected satisfaction value. To explain the setting better, let us first review shortly \LTLF. The linear temporal logic \FLTL extends LTL with an arbitrary set $\F$ of functions over $[0,1]$. Using the functions in $\F$, a specifier can formally and easily prioritize the different ways of satisfaction. 
The logic \FLTL is really a family of logics, each parameterized by a set  
$\F\subseteq \{f:[0,1]^k\to [0,1] \SecondST k\in\Nat\}$ of functions (of arbitrary arity) over $[0,1]$.
For example, as in earlier work on multi-valued extensions of LTL (c.f., \cite{FLS08}), the set $\F$ may contain the $\min\set{x,y}$, $\max\set{x,y}$, and $1-x$ functions, which are the standard quantitative analogues of the $\wedge$, $\vee$, and $\neg$ operators. The novelty of \FLTL is the ability to manipulate values by arbitrary functions. For example, $\F$ may contain the quantitative operator $\factorU_\lambda$, for $\lambda \in [0,1]$, which tunes down 
the quality of a sub-specification. Formally, the satisfaction value of the specification $\factorU_\lambda \varphi$ is the multiplication of the satisfaction value of $\varphi$ by $\lambda$.  Another useful operator is the weighted-average function $\avg{\lambda}$. There, the satisfaction value of the formula $\phi\avg{\lambda}\psi$ is the weighted (according to $\lambda$) average between the satisfaction values of $\phi$ and $\psi$. This enables the quality of the system to be an interpolation of different aspects of it. 
As an example, consider the \FLTL formula $\phi=\Alw({\it req} \rightarrow ({\it grant} \avg{\frac{2}{3}} \Next {\it grant}))$. The formula 
 specifies the fact that we want requests to be granted immediately and the grant to hold for two transactions. When this always holds, the satisfaction value is $\frac{2}{3}+\frac{1}{3}=1$. We are quite okay with grants that are given immediately and last for only one transaction, in which case the satisfaction value is $\frac{2}{3}$, and less content when grants arrive with a delay, in which case the satisfaction value is $\frac{1}{3}$.
 
Consider a system that receives requests and generates grants and consider a specification $\psi$ that have $\phi$ above as a sub-formula. Other sub-formulas of $\psi$ may require the system to generate as few grants as possible, say with $\phi'=(\Ev \Alw (\neg {\it req})) \rightarrow (\Alw \neg ({\it grant} \wedge \Next {\it grant}))$. That is, if requests eventually stop arriving, then there cannot be two successive grants. The specification $\psi$ cannot be realized with satisfaction value $1$, as the system does not know in advance whether requests eventually stops arriving. Therefore, in order to get a satisfaction value above $0$ in the subformula $\phi'$, the system must not generate two successive grants, bounding the satisfaction value of the subformula $\phi$ by $\frac23$. If, however, the input signals are distributed so that ${\it req}$ may hold with a positive probability at each moment in time, then the probability that an input sequence satisfies $\Ev \Alw (\neg {\it req})$ is $0$, causing $\phi'$ to be satisfied (that is, to have satisfaction value $1$) with probability $1$. Accordingly, under this assumption, a system that grants requests immediately and for two transactions has expected satisfaction value $1$.

Formally, one can measure the quality of a system ${\cal S}$ with respect to an \LTLF specification taking three approaches. 
In the {\em worst-case approach}, the environment is assumed to be hostile and we care for the minimal satisfaction value of some computation of ${\cal S}$. In the {\em almost-sure approach}, the environment is assumed to be stochastic 
and we care for the maximal satisfaction value that is generated with probability $1$. Then, in the 
{\em stochastic approach}, the environment is assumed to be stochastic and we care for the expected satisfaction value of the computations of ${\cal S}$, assuming some given distribution on the inputs sequences. 

\begin{example}
\label{xpl:intro hard drive 1}
{\rm 
Consider a battery-replacement controller for a certain hardware.
A computation of the hardware lasts $k$ steps. Some steps during the execution are {\em stations}, in which the battery can be replaced.  For example, the hardware may be an electric car whose battery can only be replaced at charging stations. The controller should decide at which stations it replaces the battery. On the one hand, it is wasteful to replace the battery early. On the other hand, the occurrence of stations is random, and the controller does not know whether stations are going to be encountered in the future.

Since it is wasteful to replace the battery early, the specification states that replacing it in step $1\le t\le k$ lowers the satisfaction value to $t/k$. Missing, however, all stations incurs satisfaction value $0$. We assume that each step is a station with probability $p\in [0,1]$. 

Formally, the specification for the controller is over the sets $I=\set{{\it station}}$ and $O=\set{{\it replace}}$, and is a conjunction $\phi_1\wedge\phi_2\wedge \phi_3$ of three $\LTLF$ formulas  (the abbreviation $\Next^i$ stands for a sequence of $i$ nested $\Next$ (next) operators):
\begin{itemize}
	\item $\phi_1=\Alw({\it replace}\to {\it station})$, which requires that we only replace the battery in stations,
	\item $\phi_2=(\bigvee_{1 \leq t \leq k} \Next^k {\it station})  \to  (\bigvee_{1 \leq t \leq k} \Next^k {\it replace})$,  which states that the requirement to replace the battery needs to be satisfied only if at least one station has been encountered.
	\item $\phi_3=\bigwedge_{1 \leq t \leq k} \Next^t (\neg {\it replace} \vee \factorU_{\frac{t}{k}} {\it replace})$, which lowers the satisfaction value to $\frac{t_0}{k}$, for the minimal step $1\le t_0\le k$ in which the battery is replaced.
\end{itemize}

In order to ensure a positive satisfaction value in the worst case, a transducer must replace the battery on the first station it encounters. Such a transducer guarantees a satisfaction value of $\frac{1}{k}$, but has expected satisfaction value of $(1-p)^k(1-\frac1k)+\frac1k$, which tends to $0$ as $k$ increases. 

Trading-off the satisfaction value in the worst case for a higher expected satisfaction value, a controller may also replace the battery in later stations. For example, a transducer that replaces the battery only in the $k$-th step (if it is a station) has expected satisfaction value $(1-p)^k+p$. However, its satisfaction value in the worst case, in fact in $(1-p)$ of the computations, is $0$. 

In Appendix~\ref{apx:example1} we analyze the expected satisfaction value of a transducer that replaces the battery in the first station after position $t$, for $1\le t\le k$, and show, for example, that a transducer that replaces the battery starting in position $\frac{k}{2}$ has an expected satisfaction value that tends to $\frac12$ as $k\to\infty$, for every fixed $p\in (0,1)$.
\hfill \qed}
\end{example}

The worst case approach has been studied in \cite{ABK13}, where it is shown how to synthesize, given $\phi$, a system with a maximal worst-case satisfaction value.
In this paper, we consider the two other approaches.
We model a reactive system with sets $I$ and $O$ of input and output signals, respectively, by an $I/O$-transducer: a finite-state machine whose transitions are labeled by truth assignments to the signals in $I$ and whose states are labeled by  truth assignments to the signals in $O$.
We define and solve the {\em stochastic high-quality synthesis\/} problem (\synprob, for short). The input to the problem is  an $\LTLF$ formula $\phi$ over $I \cup O$, and we seek an $I/O$-transducer that maximizes the expected satisfaction value of a computation, under a given distribution of the inputs. 
We show that the maximal expected satisfaction value is always attained by a finite-state transducer, and that computing such a transducer takes time that is doubly-exponential in $\phi$, thus the problem is not more complex than the synthesis problem for LTL.

We continue to study two extensions of the \synprob problem. In the first extension, we add a lower bound on the satisfaction value that should be attained almost surely. Formally, the input to the {\em \synprob with threshold\/} problem is  an \FLTL formula $\phi$ and a threshold $t \in [0,1]$, and we seek a transducer that maximizes the expected satisfaction value of $\phi$, but such that the satisfaction value of $\phi$ in all its computations is at least $t$ with probability $1$. As we show, adding this restriction may lower the expected value. Also, our solution to the \synprob with threshold problem generalizes  high-quality synthesis in the almost-sure approach, which we solve too. 
This approach has been studied for MDPs in~\cite{CKK15,CR15}. We show that while we can readily apply the existing solutions, the fact that our original specification is an \FLTL formula allows us to obtain slightly better solutions, with simpler analysis.

The second extension is the quantitative analogue of 
synthesis with environment assumptions. As discussed above, adding assumptions on the environment is a useful extension in the Boolean setting \cite{CHJ08,LDS11}. In the  {\em \synprob with environment assumption\/} problem we get as input an \FLTL formula $\phi$ and an environment assumption $\psi$, given by means of an $\LTL$ formula, and we seek a transducer that maximizes the expected satisfaction value of $\phi$ in computations that satisfy $\psi$. We note that the ability to reason about the quality of satisfaction in the presence of environment assumptions suggests a quantitative solution to challenges that appear already in the Boolean setting. For example, in \cite{BEK15}, the authors study the annoying phenomenon of systems realizing a specification by causing the assumption to fail. They suggest a synthesis algorithm that increases the cooperation between the system and its environment. Using \LTLF, we can associate such a cooperation with high quality. 
We show that both extensions, of threshold and assumptions, as well as their combination, do not increase the complexity of the synthesis problem. 

From a technical perspective, solving the Boolean synthesis problem amounts to translating an \LTL formula to a deterministic parity automaton (\DPW), viewing this automaton as a two-player parity game in which the system plays against the environment, and finding a winning strategy for the system. When the environment is assumed to be stochastic, the two-player game becomes a Markov decision process (MDP) with a parity objective. Such MDPs were extensively studied in~\cite{CD11,CHJ04}. In order to handle the quantitative satisfaction values of $\FLTL$, we translate an \FLTL formula $\phi$ to a set of \DPWs associated with the different possible satisfaction values of $\phi$. From the latter we obtain a mean-payoff MDP. We show that a transducer that attains the maximal expected satisfaction value is embodied in this MDP, and can be found in polynomial time. The analysis of the MDP is based on a search for {\em controllably win recurrent\/} states \cite{CHJ04}. Adding a threshold $t \in [0,1]$, the strategies of the MDP are restricted to those that guarantee that the computation reaches, with probability $1$, end components that correspond to accepting runs of \DPWs associated with satisfaction values above $t$. 

Finally, in order to handle environment assumptions, we need to maximize the conditional expected satisfaction value, given the assumption. Maximizing conditional expectation is notoriously difficult, as, unlike unconditional expectation, it is not a linear objective. Thus, it is not susceptible to linear optimization techniques, which are the standard approach to find maximizing strategies in MDPs.
In our solution, we compose the MDP with the DPW for the assumption, which enables us to adopt techniques used in the context of conditional probabilities in MDPs \cite{BKKM14}. Intuitively, we add to the MDP transitions that ``redistributes'' the probability of computations that do not satisfy the assumption.  In both cases, the size of the analyzed MDP stays doubly exponential in $\phi$ (and the assumption, in the latter case), and the required transducer is embodied in it.

\stam{
The paper is organized as follows. In Section~\ref{sec:prelim} we define automata, transducers, and the probabilistic setting, as well as the linear temporal logic \LTLF. In Section~\ref{hqs}, we define the stochastic high-quality synthesis problem, give some motivating examples, and describe the main technical tool we are going to use for the solution, namely the MDP that embodies the solution transducer. Then, Section~\ref{sec:Expected synthesis problem} includes the solution for the problem and its complexity analysis, and Sections~\ref{sec:Expected synthesis with threshold} and~\ref{sec:assume-guarantee} describe the threshold and assumption extensions, respectively. Finally, Section~\ref{sec:extensions} discusses some further extensions. 
}
Due to lack of space, most proofs appear in the appendix.

\section{Preliminaries}
\label{sec:prelim}
\subsection{Automata and Transducers}

A (deterministic) {\em pre-automaton} is a tuple $\zug{\Sigma,Q,q_0,\delta}$, where $\Sigma$ is a finite alphabet, $Q$ is a finite set of states, $q_0 \in Q$ is an initial state, and 
$\delta: Q\times \Sigma \nrightarrow Q$ is a (partial) transition function. 
A run of the pre-automaton on a word $w=\sigma_1 \cdot \sigma_2 \cdots\in \Sigma^\omega$ is a sequence of states $q_0,q_1,q_2,\ldots$ such that $q_{j+1} = \delta(q_j,\sigma_{j+1})$ for all $j \geq 0$. Note that since $\delta$ is deterministic, the pre-automaton has 
at most 
one run on each word.

A {\em deterministic parity automaton\/} (\DPW, for short) is 
$\A=\zug{\Sigma,Q,q_0,\delta,\alpha}$, where $\zug{\Sigma,Q,q_0,\delta}$ is a pre-automaton, 
$\delta$ is a total function, 
and $\alpha:Q\to \set{1,...,d}$ is an acceptance condition that maps states to ranks. The maximal rank $d$ is the {\em index\/} of $\A$.
For a run $r=q_0,q_1,q_2,\ldots$ of $\A$, let 
$\Inf(r)$ be the set of states that occur in $r$ infinitely often. Formally, $\Inf(r)=\{q: q_j =q$ for infinitely many $j \geq 0\}$. The run $r$ is accepting if the maximal rank of a state in $\Inf(r)$ is even. Formally, $\max_{q \in \Inf(r)} \{\alpha(q)\}$ is even.
A word $w \in \Sigma^\omega$ is accepted by $\A$ if the run of $\A$ on $w$ is accepting. The language of $\A$, denoted $L(\A)$, is the set of words that $\A$ accepts. 

For finite sets $I$ and $O$ of input and output signals, respectively, an {\em $I/O$  transducer} is $\T=\zug{I,O,Q,q_0,\delta,\mu}$, where $\zug{2^I,Q,q_0,\delta}$ is a pre-automaton, and $\mu:Q\to \tOUT$ is a labeling function on the states. Intuitively, $\T$ models the interaction of an environment that generates in each moment in time a letter in $2^I$ with a system that responds with letters in $2^O$. Consider an input word $w=i_0 \cdot i_1 \cdots \in \tINo$ and let $q_0,q_1,\ldots$ be the run of $\T$ on $w$. The {\em output} of $\T$ on $w$ is then $o_1,o_2,\ldots\in \tOUTo$, where $o_j=\mu(q_{j})$ for all $j\ge 1$. Note that the first output assignment is that of $q_1$, thus $\mu(q_0)$ is ignored. This reflects the fact that the environment initiates the interaction. The {\em computation of $\T$ on $w$\/} is then 
$\T(w)=i_0\cup o_1,i_1\cup o_2,\ldots \in (2^{I \cup O})^\omega$.

\subsection{Markov Chains and Markov Decision Processes} 
A {\em Markov chain} (MC, for short) $\M=\zug{S,s_0,P}$ consists of a finite or countably-infinite state space $S$, an initial state $s_0 \in S$, and a stochastic transition function $P:S \times S \rightarrow [0,1]$. That is, for all $s \in S$, we have $\sum_{s' \in S} P(s,s') =1$. Intuitively, when a run of $\M$ is in state $s$, then it moves to state $s'$ with probability $P(s,s')$. A {\em run} of $\M$ is a finite or infinite sequence $s_0,s_1,s_2,...$ of states that starts in $s_0$.
The MC $\M$ induces a probability space on finite runs. Consider a finite run $r=s_0,s_1,...,s_k$. We define $\Pr(r)=\prod_{i=1}^{k-1}P(s_i,s_{i+1})$. Thus, the probability of a finite run is the product of the probabilities of its transitions. Let $\text{Cone}(r)$ be the set of all infinite runs that start with $r$. 
The MC $\M$ induces a probability space over the set of infinite runs of $\M$ that are generated by the cylinder sets $\text{Cone}(r)$, for finite runs $r$. Formally, for every $r\in S^*$, we have $\Pr(\text{Cone}(r))=\Pr(r)$.

An {\em ergodic component} of $\M$ is a strongly connected component of $\M$ from which no other component is reachable. Formally, it is a set $C\subseteq S$ such that for every $s,t\in C$ there exist a path $s_1,s_2,...,s_k$ of states in $C$ such that $s_1=s$, $s_k=t$, and $P(s_j,s_{j+1})>0$ for every $1\le j\le k$. In addition, for every $s\in C$ and $t\notin C$, it holds that $P(s,t)=0$. Let ${\cal C}$ be the set of maximal (w.r.t. containment) ergodic components of $\M$.
We associate with $\M$ an {\em ergodic reachability probability} $\rho:{\cal C}\to [0,1]$ such that $\rho(C)$ is the probability that a run of $\M$ reaches (and therefore remains forever in) $C$.

A {\em Markov decision process} (MDP) is 
$\M=\zug{S,s_0,(\Act_s)_{s\in S},\MDPProb,\MDPre}$, where $S$ is a finite set of states, $s_0\in S$ is an initial state, and $\Act_s$ is a finite set of actions that are available in state $s\in S$. Let $\Act=\bigcup_{s\in S}\Act_s$. Then,  $\MDPProb:S\times \Act\times S\nrightarrow [0,1]$ is a (partial) stochastic transition function: for every two states $s,s'\in S$ and action $a\in A_s$, we have that $\MDPProb(s,a,s')$ is the probability of moving from $s$ to $s'$ when action $a$ is taken. Accordingly, for every $s \in S$ and $a \in A_s$, we have $\sum_{s'\in S} \MDPProb(s,a,s')=1$. Finally, $\MDPre:S\to \RR$ is a reward function on the states. 

An MDP can be thought of as a game between a player, who chooses the action to be taken in each state, and nature, which stochastically chooses the next state according to the transition probabilities. The goal of the player is to maximize the average reward along the generated run in the MDP. We now formalize this intuition.

A {\em strategy} for the player in an MDP $\M$ (a strategy for $\M$, in short) is a function $f:S^+ \to A$ that suggests to the player an action to be taken given the history of the game so far. The strategy should suggest an available action, thus $f(s_0,\ldots,s_n)\in \Act_{s_n}$. A strategy is {\em memoryless} if it depends only on the current state. We can describe a memoryless strategy by $f:S\to A$, where again, $f(s) \in A_s$. 

Given a strategy $f$, we can obtain from $\M$ the MC $\M_f=\zug{S^+,s_0,P_f}$ in which the choice of actions is resolved according to $f$. Formally, if $u,u' \in S^+$ are such that there are $t \in S^*$ and $s,s' \in S$ such that $u=t \cdot s$ and $u'=t \cdot s \cdot s'$, then $P_f(u,u')= \MDPProb(s,f(t \cdot s),s')$. Otherwise, $P_f(u,u')=0$.  Note that $\M_f$ has an infinite state space.
If $f$ is memoryless, we can simplify the construction, and define $\M_f=\zug{S,s_0,P_f}$ with $P_f(s,s')=\MDPProb(s,f(s),s')$.

An {\em end component\/} in an MDP $\M$ is a set $C\subseteq S$ such that there exist action sets $(B_s)_{s\in S}$ with $B_s\subseteq A_s$ for every $s\in S$, and for every $s,t\in C$, there exists a path $s_1,s_2,...,s_k$ of states in $C$ such that $s_1=s$, $s_k=t$ and there exist actions $a_1,...,a_{k-1}$ such that $\MDPProb(s_j,a_i,s_{j+1})>0$ and $a_i\in B_{s_j}$ for every $1\le j\le k$. In addition, for every $s\in C$ and $a\in B_s$ it holds that $\sum_{t\in C}\MDPProb(s,a,t)=1$. Intuitively, an end component is a strongly-connected component in the MDP graph that nature cannot force to leave. Equivalently, $\M$ has a strategy to stay forever in $C$. Indeed, it is not hard to see that $C$ is an end component iff there is some strategy $f$ for $\M$ such that $C$ is an ergodic component of $\M_f$.

The {\em value} $\val_\M(f)$ 
(we omit the subscript when $\M$ is clear from context) 
of a strategy $f$ for $\M$ is the expected average reward of an infinite run in $\M_f$. Formally, for a run $r=s_0,s_1,s_2,\ldots$ of $\M_f$, we define $\MDPre(r)=\liminf_{m\to\infty}\frac{1}{m}\sum_{j=0}^{m}\MDPre(s_j)$, 
where for a state $s\in S^+$ of $\M_f$, the cost $\gamma(s)$ is induced by the last state of $\M$ in $s$.
In the stochastic setting, we view each sequence of inputs, and hence also each run $r$ and the reward on $r$, as a random variable. The {\em expected value\/} of a random variable is, intuitively, its average value, weighted by probabilities. 
Let $R_{\M,f}$ be the random variable whose value is the reward on runs in $\M_f$. We define 
$\val_{\M}(f)=\EE[R_{\M,f}]$. 
The {\em value} $\val(\M)$ of an MDP $\M$ is the maximal value of a strategy in $\M$.  
It is well known (see e.g. \cite{FV96}) that  $\val(\M)$ can be attained by a memoryless strategy, which can be computed in polynomial time.

For technical reasons, we sometimes use variants of MDPs. A {\em pre-MDP} is an MDP with no reward function. A {\em parity MDP} is a pre-MDP with a parity acceptance condition $\alpha:S\to\set{1,...,d}$. In a parity MDP, the goal of the player is to maximize the probability that the generated run satisfies the parity condition. Parity-MDPs were extensively studied in e.g. \cite{CHJ04a}.

\subsection{The logic $\LTLF$}
The logic $\LTLF$ is a multi-valued logic that extends the linear temporal logic \LTL with an arbitrary set of functions $\F\subseteq\set{f:[0,1]^k\to [0,1]:k\in \Nat}$ called {quality operators}. For example, $\F$ may contain the maximum or minimum between the satisfaction values of subformulas, their product, and
their average. This enables the specifier to refine the Boolean correctness notion and associate different possible ways of satisfaction with different truth values \cite{ABK13}.

Let $AP$ be a set of Boolean atomic propositions and let $\F$ be a set of function as described above. An $\LTLF$ formula is one of the following:
\begin{itemize}
\item $\True$, $\False$, or $p$, for $p\in AP$.
\item $f(\varphi_1,...,\varphi_k)$, $\Next\varphi_1$, or $\varphi_1\Until \varphi_2$, for \FLTL formulas $\varphi_1,\ldots,\varphi_k$ and a function $f\in \F$.
\end{itemize}
The semantics of \FLTL formulas is defined with respect to infinite computations over $\tAP$. For a computation $\pi=\pi_0,\pi_1,\ldots \in \tAPo$ and position $j \geq 0$, we use $\pi^j$ to denote the suffix $\pi_j,\pi_{j+1},\ldots$. The semantics maps a computation $\pi$ and an \FLTL formula $\phi$ to the {\em satisfaction value\/} of $\varphi$ in $\pi$, denoted $\sem{\pi,\varphi}$. The satisfaction value is in $[0,1]$ and is defined inductively as described in Table~\ref{tab:semantics} below.
\begin{table}[htb!]
\centering
\begin{tabular}{|c|c||c|c|}
\hline
Formula  & Satisfaction value & Formula  & Satisfaction value\\
\hline\hline
$\sem{\pi,\True}$  & 1 & $\sem{\pi,f(\varphi_1,...,\varphi_k)}$ & $f(\sem{\pi,\varphi_1},...,\sem{\pi,\varphi_k})$\\
\hline
$\sem{\pi,\False}$  & 0 & $\sem{\pi, \Next \varphi_1}$ & $\sem{\pi^1, \varphi_1}$\\
\hline
$\sem{\pi, p}$ &$\begin{array}{ll}
1 & \mbox{ if }p\in \pi_0\\
0 & \mbox{ if }p\notin \pi_0\\
\end{array}$ & $\sem{\pi, \varphi_1 \Until \varphi_2}$ & $\max\limits_{0\leq i < |\pi|} \{ \min \{\sem{\pi^i,\varphi_2},  \min\limits_{0\leq j < i}\sem{\pi^j,\varphi_1} \} \}$\\
\hline 
\end{tabular}
\caption{The semantics of \FLTL.} 
\label{tab:semantics}
\end{table}

\vspace{-.5cm}
The logic \LTL can be viewed as \FLTL for $\F$ that models the usual Boolean operators. For simplicity, we use the common such functions as abbreviation, as described below. In addition, we introduce notations for two useful quality operators, namely factoring and weighted average. Let $x,y,\lambda \in[0,1]$. Then, 

\stam{
\begin{itemize}
\item $\neg x=1-x$, 
\item $x\vee y=\maxs{x,y}$, 
\item $x\wedge y=\min\set{x,y}$, 
\item $x\to y = \maxs{1-x,y}$.
\item For a parameter $\lambda \in [0,1]$, we have 
\begin{itemize}
\item $\factorU_\lambda x=\lambda\cdot x$,
\item $x\avg{\lambda}y=\lambda \cdot x+(1-\lambda)\cdot y$.
\end{itemize}
\end{itemize}
}

\begin{center}
\begin{tabular}{lllll}
$\bullet ~~ \neg x=1-x$  &~~~& $\bullet ~~ x\vee y=\maxs{x,y}$ &~~~& $\bullet ~~ x\wedge y=\min\set{x,y}$ \\
$\bullet ~~ x\rightarrow y=\maxs{1-x,y}$ &~~& $\bullet ~~ \factorU_\lambda x=\lambda\cdot x$&~~&$\bullet ~~ x\avg{\lambda}y=\lambda \cdot x+(1-\lambda)\cdot y$
\end{tabular}
\end{center}

\begin{example}
\label{xpl: fltl formula}
{\rm Consider a scheduler that receives requests and generates grants and consider the \FLTL formula $\phi=\phi_1\wedge \phi_2$, with $\phi_1=\Alw({\it req} \rightarrow \Next({\it grant}\avg{\frac{2}{3}}\Next{\it grant}))$ and $\phi_2=\neg (\factorU_{\frac34}\Alw \neg{\it req})$. The satisfaction value of the formula $\phi_1$ is $1$ if every request is granted in the next cycle and the grant lasts for two consecutive cycles. If the grant lasts for only one cycle, then the satisfaction value is reduced to $\frac{2}{3}$ if it is the cycle right after the request, and to $\frac{1}{3}$ if it is the next one. In addition, the conjunction with $\phi_2$ implies that if there are no requests, then the satisfaction value is at most $\frac14$. The example demonstrates how \FLTL can conveniently prioritize different scenarios, as well as embody vacuity considerations in the formula. \hfill \qed
}
\end{example}

For an \FLTL formula $\varphi$, let $V(\varphi)=\{\sem{\pi,\varphi} \ST \pi\in (2^{AP})^{\omega}\}$. That is, $V(\varphi)$ is the set of possible satisfaction values of $\varphi$ in arbitrary computations. 
\begin{theorem}{\rm \cite{ABK13}} 
\label{from abk}
Consider an \FLTL formula $\varphi$.
\begin{itemize}
\item
$|V(\varphi)|\le 2^{|\varphi|}$.
\item
For every predicate $\theta \subseteq [0,1]$, there exists a DPW $\A_{\varphi,\theta}$ such that $L(\A_{\varphi,\theta})=\{ \pi :\sem{\pi,\varphi} \in \theta\}$. Furthermore, $\A_{\varphi,\theta}$ has at most $2^{2^{O(|\varphi|)}}$ states and its index is at most $2^{|\varphi|}$.
\end{itemize}
\end{theorem}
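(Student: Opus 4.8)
The plan is to prove both items by induction on the structure of the \FLTL formula $\varphi$.

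For the first item, I would argue that $\min$ and $\max$ never introduce fresh values, so only the function applications can enlarge the value set, and even they do so only mildly. Concretely, in the base cases $V(\True)=\set{1}$, $V(\False)=\set{0}$ and $V(p)=\set{0,1}$, so $|V(\varphi)|\le 2\le 2^{|\varphi|}$. For $\varphi=\Next\varphi_1$ we have $\sem{\pi,\varphi}=\sem{\pi^1,\varphi_1}$, hence $V(\varphi)=V(\varphi_1)$ and the bound is inherited. For $\varphi=f(\varphi_1,\dots,\varphi_k)$ every attainable value is of the form $f(u_1,\dots,u_k)$ with $u_i\in V(\varphi_i)$, so $|V(\varphi)|\le\prod_i|V(\varphi_i)|\le\prod_i 2^{|\varphi_i|}=2^{|\varphi|-1}$. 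Finally, for $\varphi=\varphi_1\Until\varphi_2$ the value is a maximum of minima of numbers drawn from $V(\varphi_1)$ and $V(\varphi_2)$; since $\min$ and $\max$ of a finite set return one of its elements, $V(\varphi)\subseteq V(\varphi_1)\cup V(\varphi_2)$, giving $|V(\varphi)|\le 2^{|\varphi_1|}+2^{|\varphi_2|}\le 2^{|\varphi_1|+|\varphi_2|+1}=2^{|\varphi|}$. This closes the induction.

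For the second item, the first observation is that, by the first item, $V(\varphi)$ is finite, so $\set{\pi:\sem{\pi,\varphi}\in\theta}=\bigcup_{v\in\theta\cap V(\varphi)}\set{\pi:\sem{\pi,\varphi}=v}$, and it suffices to handle the threshold languages $L_{\ge v}=\set{\pi:\sem{\pi,\varphi}\ge v}$ (since $\set{\pi:\sem{\pi,\varphi}=v}=L_{\ge v}\setminus L_{>v}$ and $L_{>v}=L_{\ge v'}$ for the successor value $v'$). The heart of the argument is a threshold-propagation claim, proved by a second structural induction: a quantitative comparison against $\varphi$ reduces to a Boolean, \LTL-style condition over comparisons against the subformulas. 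Indeed, $\set{\pi:\sem{\pi,\Next\varphi_1}\ge v}$ is the one-step shift of $\set{\pi:\sem{\pi,\varphi_1}\ge v}$; for $f(\varphi_1,\dots,\varphi_k)$ the condition $f(u_1,\dots,u_k)\ge v$ is a finite Boolean combination (over the finitely many tuples in $\prod_i V(\varphi_i)$ on which $f$ exceeds $v$) of the sub-conditions $\sem{\cdot,\varphi_i}=u_i$; and, crucially, $\sem{\pi,\varphi_1\Until\varphi_2}\ge v$ holds iff there is a position $i$ with $\sem{\pi^i,\varphi_2}\ge v$ and $\sem{\pi^j,\varphi_1}\ge v$ for all $j<i$, i.e. exactly the Boolean until $\big(\sem{\cdot,\varphi_1}\ge v\big)\Until\big(\sem{\cdot,\varphi_2}\ge v\big)$ of the two threshold languages. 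Hence every threshold language is $\omega$-regular. From here I would assemble a single \DPW: ordering $V(\varphi)=\set{v_1<\cdots<v_m}$ with $m\le 2^{|\varphi|}$, I build a deterministic automaton reading $\pi$ whose acceptance computes $\sem{\pi,\varphi}$, in the sense that the maximal rank seen infinitely often encodes the attained value $v_\ell$; the accepting ranks of $\A_{\varphi,\theta}$ are then declared to be those encoding values in $\theta$. Because only the $m\le 2^{|\varphi|}$ values must be distinguished, the parity index is at most $2^{|\varphi|}$. The inductive steps are deterministic operations: the $\Next$ case prepends one state; the $f$ case takes the product of the sub-automata and re-ranks the combined acceptance so that it reports $f$ applied to the componentwise values; and the $\Until$ case is the only one requiring a genuine determinization, to handle the ``maximum over release points of the minimum of suffix values''.

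The main obstacle is the size bound: obtaining a \emph{doubly}-exponential (rather than triply-exponential) \DPW. The naive route — lifting $\varphi$ to a classical \LTL formula for each threshold and then determinizing — fails, because the $f$ case expands into a disjunction over tuples of sub-values, so the lifted formula is already exponential in $|\varphi|$ and its \DPW is triply exponential. The fix is to keep the construction deterministic throughout: products (for $f$) merely multiply state counts, and since $\prod_i 2^{2^{O(|\varphi_i|)}}=2^{2^{O(|\varphi|)}}$, the doubly-exponential regime is preserved, while the single determinization needed for $\Until$ contributes the one and only extra exponential, exactly as in the classical \LTL-to-\DPW translation. Bounding the parity index by $|V(\varphi)|$, and checking that the $f$-combination of several parity conditions can be merged into one without exceeding that index, are the remaining technical points.
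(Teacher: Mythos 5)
First, note that the paper itself does not prove this theorem: it is imported verbatim from \cite{ABK13}, so your argument has to be measured against the construction there. Your first item is correct and is essentially the standard argument: $\Next$ preserves $V$, the $\Until$ case satisfies $V(\varphi_1\Until\varphi_2)\subseteq V(\varphi_1)\cup V(\varphi_2)$ because $\max$ and $\min$ over the (inductively finite) value sets select elements of them, and the $f$ case is bounded by the product of the $|V(\varphi_i)|$. Your reduction of the second item to threshold languages, and the threshold-propagation equivalences --- in particular that $\sem{\pi,\varphi_1\Until\varphi_2}\ge v$ iff the Boolean until of the two threshold conditions holds --- are also correct and are indeed the backbone of the proof in \cite{ABK13}; they establish $\omega$-regularity.

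The gap is in the automaton construction and its size analysis. You determinize compositionally, once per $\Until$. But given \DPWs $D_1,D_2$ for the threshold languages $L_1,L_2$ of the subformulas, a deterministic automaton for $\{\pi: \exists i.\ \pi^i\in L_2 \wedge \forall j<i.\ \pi^j\in L_1\}$ must monitor, at every position, one copy of $D_1$ launched at each earlier position and decide their joint acceptance; merging these copies costs at least an exponential in the size of $D_1$ (a subset construction plus index-appearance-record-style bookkeeping for the parity conditions). Since $D_1$ is itself already doubly exponential by induction, and $\Until$ operators nest, this yields a tower of exponentials whose height is the $\Until$-nesting depth, not $2^{2^{O(|\varphi|)}}$. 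The appeal to ``the classical LTL-to-\DPW translation'' does not rescue this, because that translation is not compositional in your sense: it builds a \emph{single} nondeterministic automaton for the entire formula --- in the quantitative setting of \cite{ABK13}, states are consistent assignments of a value in $V(\psi)$ to each subformula $\psi$ in the closure of $\varphi$, the $\Until$ is handled by the expansion $\sem{\pi,\varphi_1\Until\varphi_2}=\max\{\sem{\pi,\varphi_2},\min\{\sem{\pi,\varphi_1},\sem{\pi^1,\varphi_1\Until\varphi_2}\}\}$ together with a B\"uchi requirement that pending untils are eventually fulfilled --- and it determinizes exactly once, at the top level. That single global determinization of an exponential-size nondeterministic automaton is what gives $2^{2^{O(|\varphi|)}}$ states and exponential index. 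Relatedly, your justification of the index bound (``only $m\le 2^{|\varphi|}$ values must be distinguished'') is not sound for a \DPW recognizing a single language $\{\pi:\sem{\pi,\varphi}\in\theta\}$: the parity index there is inherited from the determinization, not from $|V(\varphi)|$.
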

 
\section{High-Quality Synthesis}
\label{hqs}

Consider an $I/O$-transducer $\T$ and an \FLTL formula $\phi$ over $I\cup O$. 
Each computation of $\T$ may have a different satisfaction value for $\phi$. 
We can measure the quality of $\T$ taking three approaches: 
\begin{itemize}
\item
{\em Worst-case approach:} The environment is assumed to be hostile and we care for the minimal satisfaction value of some computation of $\T$. Formally, $\sem{\T,\phi}_w=\min\{\sem{\T(w),\phi} : w \in (2^I)^\omega\}$. Note that no matter what the input sequence is, the specification $\phi$ is satisfied with value at least 
$\sem{\T,\phi}_w$.
\item
{\em Almost-sure approach:} The environment is assumed to be stochastic 
and we care for the maximal satisfaction value that is generated with probability $1$. Formally, given a distribution $\nu$ of $(2^I)^\omega$, we define 
$\sem{\T,\phi}^\nu_a=\max\{v: \text{ there is } W \text{ with } \nu(W)=1\text{ and }\sem{T(w),\phi}\ge v \text{ for every }w\in W\}$.
Note that the specification $\phi$ is satisfied almost surely with value at least $\sem{\T,\phi}^\nu_a$.
\item
{\em Stochastic approach:} The environment is assumed to be stochastic and we care for the expected satisfaction value of the computations of $\T$, assuming some given distribution on the inputs sequences. Formally, let $X_{\T,\phi}: (2^I)^\omega \rightarrow \RR$ be a random variable that assigns each sequence $w \in (2^I)^\omega$ of input signals with $\sem{\T(w),\phi}$. Then, given a distribution $\nu$ of $(2^I)^\omega$, we define $\sem{\T,\phi}_s^\nu=\EE[X_{\T,\phi}]$, when the sequences in $(2^I)^\omega$ are sampled according to $\nu$. 
\end{itemize}
The worst case approach has been studied in \cite{ABK13}, where it is shown how to find $\sem{\T,\phi}_w$ and how to synthesize, given $\phi$, a transducer with a maximal worst-case satisfaction value.
In this paper, we consider the stochastic approach. For simplicity, we consider environments with a uniform distribution on the input signals. That is, $\nu$ is such that in each moment in time, each input signal holds with probability $\frac{1}{2}$, thus the probability of each letter in $2^I$ is $\frac{1}{2^{|I|}}$ (see Remark~\ref{rmrk:uniform distribution}). Since $\nu$ is fixed, we omit it from the notation and use $\sem{\T,\phi}_a$ and $\sem{\T,\phi}_s$.

\begin{remark}
	\label{rmrk:uniform distribution}
{\rm	{\bf [On the choice of a uniform distribution]} Recall that we consider a uniform distribution on the letters in $\tIN$. In practice, the distribution on the truth assignments to the input signals may be richer. In the general case, such a distribution can be given by an MDP. 
	
	Adjusting our setting and algorithms to handle such distributions involves only a small technical elaboration, orthogonal to the technical challenges that exist already in the setting of a uniform distribution. Accordingly, throughout the paper we assume a uniform distribution. 
	In Section~\ref{nud}, we describe how our setting and algorithms are extended to the general case.}
	\hfill \qed
\end{remark}

\begin{example}
\label{xpl:hard drive 1}
{\rm 
Consider a hard-drive writing protocol that
needs to finalize a write operation through some connection. The connection needs to be closed as soon as possible, to allow access to the drive.  However, data may still arrive in the first two cycles, and if the connection is closed in the first cycle, then the data that arrives in the second cycle gets lost. The issue is that the decision as to whether to close the connection is made during the first cycle, before the protocol knows whether data is going to arrive in the second cycle.
The specification that formulates the above scenario is over $I=\{data\}$ and $O=\{close\}$ and is
$\phi =((\Next data)\to \neg close)\wedge ((\neg \Next data) \to close) \vee \factorU_\frac12\Next close).$

That is, if data arrives in the second cycle, then we should not close the connection in the first cycle. In addition, if data does not arrive in the second cycle, we should 
close the connection in the first cycle -- this would give us satisfaction value $1$ in the second conjunct, but we may also close the connection only in the second cycle, which would guarantee a satisfaction value of $1$ in the  first conjunct, but would reduce the satisfaction value of the second conjunct to $\frac{1}{2}$ in cases data does not arrive in the second cycle.  

Let $p \in [0,1]$ be the probability that data arrives in the second cycle. 
Consider a transducer $\T_1$ that closes the connection in the first cycle. With probability $p$, we have that $\Next data$ holds, in which case $\phi$ has satisfaction value $0$. Also, with probability $1-p$, we have that $\Next data$ does not hold and the satisfaction value of $\phi$ is $1$. Thus, the satisfaction value of $\varphi$ is $0$ in the worst case, and this is also the highest satisfaction value that $\T_1$ achieves with probability $1$. On the other hand, the expected satisfaction value of $\varphi$ in a computation of $\T_1$ is $p \cdot 0 + (1-p) \cdot 1 = 1-p$. 
Thus, $\sem{\T_1,\phi}_w = \sem{\T_1,\phi}_a = 0$, whereas $\sem{\T_1,\phi}_s = 1-p$.

Consider now a transducer $\T_2$ that closes the connection only on the second cycle.
With probability $p$, we have that $\Next data$ holds, in which case the satisfaction value of $\phi$ is $1$. Also, with probability $1-p$, we have that $\Next data$ does not hold,  in which case the satisfaction value of $\phi$ is $\frac12$. Thus, now the satisfaction value of $\varphi$ is $\frac12$ in the worst case, and this is also the highest satisfaction value that $\T_2$ achieves with probability $1$. On the other hand, the expected satisfaction value of $\varphi$ in a computation of $\T_2$ is $p \cdot 1+(1-p)\cdot \frac12= \frac12 (1+p)$. 
Thus, $\sem{\T_2,\phi}_a =\sem{\T_2,\phi}_w =\frac12$, whereas $\sem{\T_2,\phi}_s = \frac12 (1+p)$. 

To conclude, when $p \geq \frac{1}{3}$, in which case $\frac12 (1+p) \geq 1-p$, then $\T_2$ is superior to $\T_1$ in all the three approaches. When, however, $p < \frac{1}{3}$, then a designer that cares for the expected satisfaction value should prefer $\T_1$. 
\hfill \qed
}
\end{example}

\subsection{The Achievability MDP of an \FLTL formula}
\label{sec:Expected synthesis}

In this section we develop the technical tool we are going to use for solving the high-quality synthesis problem in the stochastic approach. 

Consider an \LTLF formula $\phi$. Let $V(\phi)=\{v_1,...,v_n\}$, with $v_1<...<v_n\in [0,1]$. By
 Theorem~\ref{from abk}, we have that $n \leq 2^{|\phi|}$. Also, for every $1 \leq i \leq n$, there is a \DPW $\A_i$ such that $L(\A_i)=\set{w:\sem{w,\phi}=v_i}$.
Let  $\A_i=\zug{\tAP,Q^i,q_0^i,\delta^i,\alpha^i}$. We construct the product pre-automaton $\A=\A_1\times\ldots\times \A_n$ that subsumes the joint behavior of the \DPWs. Formally, $\A=\zug{\tAP,S,s_0,\mu}$, where $S=Q^1\times\ldots\times Q^n$, the initial state is $s_0=\zug{q^1_0,...,q^n_0}$, and for every state $s=\zug{q_1,...,q_n}$ and $\sigma\in \tAP$, we have $\mu(s,\sigma)=\zug{\delta^1(q_1,\sigma),...,\delta^n(q_n,\sigma)}$.

Every pre-automaton $\B=\zug{\tAP,Q,q_0,\delta}$ 
induces a pre-MDP $\M_\B=\zug{Q,q_0,2^O,\MDPProb}$ in which for every two states $q,q' \in S$ and action $o\in \tOUT$, we have $\MDPProb(q,o,q')=\frac{|\{i \in 2^I : \delta(q,i \cup o)=q'\}|}{2^{|I|}}$. That is, choosing action $o \in 2^O$ in state $q$, the MDP samples the possible inputs $i \in 2^I$ uniformly and moves to state $\delta(q,i \cup o)$.  Consider a memoryless strategy $f:Q \rightarrow 2^O$ for $\M_\B$. The strategy $f$ induces an $I/O$-transducer $\T[\M_\B,f]=\zug{I,O,Q,q_0,\delta',\mu}$ in which for every state $q \in Q$, we have $\mu(q)=f(q)$, and for all $i \in 2^I$, we have $\delta'(q,i)=\delta(i \cup \mu(q))$. Thus, the transducer has the same state space as $\B$, it lets $f$ fix the labels of the states, and uses this label to complete the $2^I$ component of the alphabet to a letter in $\tAP$. 

Consider a parity acceptance condition $\alpha$ on the state space $Q$ of $\B$. Using the notations of \cite{CHJ04a}, a state $q \in Q$ in $\M_\B$ is {\em controllably win recurrent\/} (c.w.r., for short) if there exists an end component $U\subseteq Q$ such that $q\in U$,
$\alpha(q)=\max_{p\in U}\set{\alpha(p)}$, and $\alpha(q)$ is even. That is, $q$ has the maximal rank in $U$, and this rank is even. 
The end component $U$ is referred to as a {\em witness} for $q$ being c.w.r. Intuitively, a parity-MDP with a parity objective  $\alpha$ has a strategy to win with probability $1$ from all c.w.r. states. Moreover, if $U$ is a witness for some c.w.r. state, then there exists a strategy to win with probability $1$ from every state in $U$. 
If, however, a run of $\M_\B$ reaches an end component that does not have a c.w.r. state, then it is winning with probability $0$. 

Once we have defined the product pre-automaton $\A$, we construct an MDP $\M_\A=\zug{S,s_0,2^O,\MDPProb,\MDPre}$, with the following reward function. For a state $s=\zug{q_1,...,q_n}$ of $\M_\A$, we say that a value $v_i \in V(\phi)$ is {\em achievable\/} from $s$ if there exists a c.w.r. state in $\M_{\A_i}$ with a witness $U_i$ for which
$q_i \in U_i$.
Then, $\MDPre(s)=\max\{v_i: v_i$ is achievable from $s \}$. 
Note that the way we have defined $\A$ guarantees that every state that is a part of some end component has at least one value $v_i$ that is achievable from $s$. For states that are not in end components, we define the reward to be $0$.
Intuitively, $\MDPre(s)$ is the highest satisfaction value that can be guaranteed with probability $1$ from $s$.
We refer to $\M_\A$ as the {\em achievability MDP for $\phi$}.

This completes the construction of $\M_\A$. Note that every end component $U$ consists of states with the same value $v_U$. Thus, every infinite run $r$ of $\M$ eventually gets trapped in some end component $U$, implying that $\MDPre(r)=v_U$. Indeed, the rewards along the states in the finite prefix of $r$ that leads to $U$ are averaged out.
For an end-component $U$ of $\M_\A$, let $U|_i$ be the projection of $U$ on $Q^i$. Note that $U|_i$ is an end component in $\A_i$.

\section{Synthesis Against a Stochastic Environment}
\label{sec:Expected synthesis problem}
In the {\em stochastic high-quality synthesis\/} problem (\synprob, for short), we get as input an $\LTLF$ formula $\phi$ over sets $I$ and $O$ of input and output signals, and we seek an $I/O$-transducer that maximizes the expected value of a computation (under a uniform distribution of the inputs). Formally, we want to compute $\max_{\T}\set{\EE[\sem{\T,\phi}_s]}$ and return the witness transducer.\footnote{A-priori, it is not clear that the maximum is attained. As we prove, however, this is in fact the case.}

We solve the \synprob problem by reasoning on the achievable MDP $\M_\A$. Consider a strategy $f$ for $\M_\A$.
Let $\T$ be the transducer induced from $f$, that is $\T=\T[M_\A,f]$. Recall the random variable $X_{\T,\phi}: (2^I)^\omega \rightarrow \RR$ that maps $w \in (2^I)^\omega$ to $\sem{\T(w),\phi}$. We define the random variables $Y_{\T,\phi}:\tINo\to \RR$ as follows. For every $w\in \tINo$, we let $Y_{\T,\phi}(w)$ be the mean-payoff of the values along the run of $\A$ on $\T(w)$. Formally, let $r$ be the run of $\A$ on $\T(w)$. Then, $Y_{\T,\phi}(w)=\MDPre(r)$, where $\MDPre$ is the reward function of $\M_\A$. By definition, we have that 
$\sem{\T,\phi}_s=\EE[X_{\T,\phi}]$. Since $\M_\A$ is obtained by assuming a uniform distribution on the inputs, we have that $\EE[Y_{\T,\phi}]=\val_{\M_\A}(f)$.

\begin{theorem}
\label{thm:M to T shqsyn}
Consider an \FLTL formula $\phi$. Let $\M_\A$ be the achievability MDP for $\phi$. For every value $v \in [0,1]$, there exists a strategy $f$ in $\M_\A$ such that $\val_{\M_\A}(f) \geq v$ iff there exists an $I/O$-transducer $\T$ such that  $\sem{\T,\phi}_s \geq v$. 
Moreover, we can find in time polynomial in $\M_\A$ a memoryless strategy $f$ such that
$\sem{\T[\M_\A,f],\phi}_s$ maximizes $\set{\EE[\sem{\T,\phi}_s]}$.
\end{theorem}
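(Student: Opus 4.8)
The plan is to connect transducers with strategies in $\M_\A$ through the induced-transducer map $\T[\M_\A,\cdot]$ and the two random variables $X_{\T,\phi}$ (the true satisfaction value, with $\sem{\T,\phi}_s=\EE[X_{\T,\phi}]$) and $Y_{\T,\phi}$ (the mean payoff along the run of $\A$), exploiting the identity $\EE[Y_{\T,\phi}]=\val_{\M_\A}(f)$ already established. The entire argument then reduces to comparing $X_{\T,\phi}$ and $Y_{\T,\phi}$ on almost every run, which is eventually trapped in an ergodic component; the two resulting inequalities give the iff, and the \emph{moreover} follows by combining the cited polynomial-time algorithms for mean-payoff and parity MDPs.

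For the direction \emph{transducer $\Rightarrow$ strategy}, I would take the product of $\T$ with the product automaton $\A$, obtaining a (finite-memory) strategy $f$ for $\M_\A$ whose runs track $\A$ on $\T(w)$. Almost every run settles in an ergodic component $U$, whose projection onto $\A_i$ is exactly $\Inf(r_i)=U|_i$. Since the $\A_i$ partition computations by satisfaction value, there is a unique index $i^\ast$ with $U|_{i^\ast}$ accepting, so $X_{\T,\phi}(w)=v_{i^\ast}$; and because $U|_{i^\ast}$ has even maximal rank it is itself a witness, so $v_{i^\ast}$ is achievable from the trapped states. Hence $X_{\T,\phi}(w)=v_{i^\ast}\le \MDPre(U)=Y_{\T,\phi}(w)$ almost surely, giving $\sem{\T,\phi}_s=\EE[X_{\T,\phi}]\le\EE[Y_{\T,\phi}]=\val_{\M_\A}(f)$, so $\val_{\M_\A}(f)\ge v$ whenever $\sem{\T,\phi}_s\ge v$.

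For the converse, \emph{strategy $\Rightarrow$ transducer}, the naive choice $\T[\M_\A,f]$ need not suffice: a strategy may sit forever in an ergodic component $U$ of reward $v_U$ while the automaton $\A_{i_U}$ realizing $v_U$ fails to accept, because $U|_{i_U}$ can contain an accepting sub-component yet have odd maximal rank overall. I would therefore refine $f$ into a strategy $g$ that, once the run settles in $U$, switches to the parity-winning strategy of $\M_{\A_{i_U}}$ that witnesses achievability of $v_{i_U}=v_U$ (lifted to $\M_\A$ by projecting histories onto the $i_U$-th coordinate), thereby trapping the run in a sub-component on which $\A_{i_U}$ accepts. For $g$ we then have $X_{\T,\phi}=v_U=Y_{\T,\phi}$ almost surely, so $\sem{\T[\M_\A,g],\phi}_s=\EE[Y_{\T,\phi}]=\val_{\M_\A}(g)$; scheduling the switch so as to preserve the probability of reaching each ergodic component keeps $\val_{\M_\A}(g)\ge v$. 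Together with the first direction this yields $\max_\T \sem{\T,\phi}_s=\val(\M_\A)$, and in particular the maximum is attained.

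For the complexity and the memoryless claim, I would compute $\val(\M_\A)$ and an optimal memoryless mean-payoff strategy in polynomial time by the cited MDP results, and then perform the refinement using memoryless parity-winning strategies inside each ergodic component, together with a reachability-optimal attractor into the value-realizing sub-components; the c.w.r.\ states, their witnesses, and these strategies are all computable in polynomial time. Since both the reaching phase and the in-component play are memoryless, their combination is a single memoryless strategy $f$, and $\T[\M_\A,f]$ attains $\max_\T\set{\EE[\sem{\T,\phi}_s]}$. The hard part will be exactly this reconciliation in the converse direction: the reward $\MDPre$ records the \emph{achievable} value, which an arbitrary strategy may fail to \emph{realize}, so the crux is to argue that inside each ergodic component one can switch to a memoryless parity-winning strategy that actually realizes $v_U$ without disturbing the reachability probabilities fixed by the mean-payoff-optimal strategy.
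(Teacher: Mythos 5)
Your proposal follows essentially the same route as the paper's proof: the first direction establishes $\Pr(X_{\T,\phi}\le Y_{\T,\phi})=1$ by observing that the ergodic component trapped almost surely projects to an accepting (hence c.w.r.-witnessing) end component of the unique $\A_{i^\ast}$, and the converse direction repairs $\T[\M_\A,f]$ by switching, inside each reached end component, to the memoryless parity-winning strategy realizing the achievable value without altering the reachability probabilities. Your explicit identification of the achievable-versus-realized gap and the lifting of the parity strategy by projection are just more careful phrasings of what the paper does, so the argument is correct and matches the paper.
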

\begin{proof}
We start by proving that if there exists a transducer $\T$ such that $\sem{\T,\phi}_s\ge v$, then there exists a strategy $f$ such that $\val_{\M_\A}(f)\ge v$. For this, we prove, in Appendix~\ref{app thm:M to T shqsyn}, that $\EE[X_{\T,\phi}]\le \EE[Y_{\T,\phi}]$. This is indeed sufficient, as we can then take $f$ to be the strategy induced by $\T$. 

For the converse implication, consider a strategy $f$ in $\M_\A$ such that $\val_{\M_\A}(f)\ge v$. 
By \cite{FV96}, we can assume that $f$ is memoryless. Let $\T=\T[\M_\A,f]$ be the transducer induced by $f$. In Appendix~\ref{app thm:M to T shqsyn}, we show that there exists a transducer $\T'$ such that $\EE[X_{\T',\phi}]= \EE[Y_{\T,\phi}]$, thus concluding the claim.
\end{proof}

We now proceed to show how to solve the \synprob problem. 
\begin{theorem}
\label{thm:solving expected synthesis}
Solving the \synprob problem for \FLTL can be done in doubly-exponential time. The corresponding decision problem is 2EXPTIME complete.
\end{theorem}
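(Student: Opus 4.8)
The plan is to establish the upper and lower bounds separately, leveraging the achievability MDP $\M_\A$ and the preceding structural results. For the upper bound (membership in 2EXPTIME), I would combine Theorem~\ref{from abk}, the construction of $\M_\A$, and Theorem~\ref{thm:M to T shqsyn} into a single pipeline. First, by Theorem~\ref{from abk}, the set $V(\phi)=\set{v_1,\ldots,v_n}$ of satisfaction values has size $n\le 2^{|\phi|}$, and for each $1\le i\le n$ we obtain a \DPW $\A_i$ with at most $2^{2^{O(|\phi|)}}$ states and index at most $2^{|\phi|}$, recognizing exactly the computations whose satisfaction value is $v_i$. Taking the product $\A=\A_1\times\cdots\times\A_n$ yields a pre-automaton whose state space $S=Q^1\times\cdots\times Q^n$ has size at most $\bigl(2^{2^{O(|\phi|)}}\bigr)^{2^{|\phi|}}=2^{2^{O(|\phi|)}}$, i.e.\ still doubly-exponential in $|\phi|$. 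I would then note that the achievability MDP $\M_\A$ is built on this state space, and that computing its reward function requires, for each state, identifying which values $v_i$ are achievable---this in turn requires detecting controllably win recurrent states in each parity-MDP $\M_{\A_i}$, which by \cite{CHJ04a} can be done in time polynomial in the MDP and hence doubly-exponential in $|\phi|$.

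Continuing the upper bound, once $\M_\A$ is constructed I would invoke the well-known fact (cited earlier from \cite{FV96}) that the optimal value of a mean-payoff MDP, together with an optimal memoryless strategy, can be computed in time polynomial in the MDP. Since $\M_\A$ has size $2^{2^{O(|\phi|)}}$, this polynomial-time step runs in doubly-exponential time in $|\phi|$. Theorem~\ref{thm:M to T shqsyn} then guarantees that the resulting memoryless strategy $f$ induces a transducer $\T[\M_\A,f]$ whose expected satisfaction value $\sem{\T[\M_\A,f],\phi}_s$ equals $\val(\M_\A)=\max_\T\set{\EE[\sem{\T,\phi}_s]}$, and moreover that this maximum is actually attained (answering the footnote concern). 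Reading off the decision version---whether there is a transducer with expected value at least a given threshold $v$---then reduces to comparing $\val(\M_\A)$ against $v$, again within the doubly-exponential bound. This establishes membership in 2EXPTIME.

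For the lower bound (2EXPTIME-hardness), the plan is to reduce from ordinary Boolean \LTL realizability, which is known to be 2EXPTIME-complete \cite{PR89a}. Given an \LTL formula $\psi$, I would view it as an \FLTL formula over the Boolean quality operators, so that $\sem{\pi,\psi}\in\set{0,1}$ for every computation $\pi$, with value $1$ exactly when $\pi\models\psi$ in the classical sense. The subtlety is that the stochastic objective asks for the \emph{expected} value under a uniform input distribution, not worst-case satisfaction, so a direct reduction does not immediately force $\psi$ to hold on all inputs. I expect this to be the main obstacle: I would need to argue that the expected value can be made $1$ if and only if $\psi$ is realizable. The natural approach is to observe that since $\sem{\T(w),\psi}\in\set{0,1}$, the expected value $\sem{\T,\psi}_s$ equals the probability (under the uniform distribution on inputs) that $\T(w)\models\psi$, and to show that this probability equals $1$ exactly when $\T$ realizes $\psi$ against \emph{every} input---using the fact that every cylinder of inputs has positive probability, so a single violating input word would already be part of a positive-measure set of violating computations whenever violations are detectable on finite prefixes, or more carefully, that the set of inputs on which a fixed finite-state $\T$ violates $\psi$ is $\omega$-regular and has measure zero iff it is empty for a suitably chosen (safety-closed or determinized) presentation. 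Handling this measure-zero-versus-empty gap carefully is where the real work lies; one clean route is to note that realizability of $\psi$ is equivalent to $\val(\M_\A)=1$ directly through Theorem~\ref{thm:M to T shqsyn}, thereby transferring hardness without re-deriving the measure argument from scratch.

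Combining the two bounds yields the theorem. The doubly-exponential upper bound follows from the construction-plus-MDP-solving pipeline, and the matching lower bound follows from the reduction from \LTL realizability, giving 2EXPTIME-completeness for the decision problem and a doubly-exponential-time algorithm for the optimization problem.
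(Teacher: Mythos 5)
Your upper bound is correct and follows the paper's proof exactly: build the product of the $\A_i$'s, bound its size by $2^{2^{O(|\phi|)}}$ via Theorem~\ref{from abk}, and solve the resulting mean-payoff MDP in time polynomial in its size, with Theorem~\ref{thm:M to T shqsyn} transferring the optimal strategy to a transducer. The gap is in the lower bound. You correctly identify the obstacle --- for a Boolean $\psi$, expected value $1$ means satisfaction with probability $1$, which for general LTL is strictly weaker than realizability (e.g.\ $\Alw\Ev{\it req}$ is satisfied with probability $1$ by every transducer but is not realizable in the worst case) --- but none of your proposed resolutions closes it as stated. The claim that the set of violating inputs, being $\omega$-regular, has measure zero iff it is empty is false (the set of words satisfying $\Ev\Alw\neg{\it req}$ is nonempty yet has measure zero), and no ``suitably chosen presentation'' repairs this for liveness violations. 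Your ``clean route'' via Theorem~\ref{thm:M to T shqsyn} is circular: that theorem equates $\val(\M_\A)$ with the optimal \emph{expected} satisfaction value, so $\val(\M_\A)=1$ is by definition the probability-$1$ condition, not realizability; asserting that the two are equivalent is exactly the unproved step.

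The resolution the paper uses is your first, undeveloped option made concrete: the standard 2EXPTIME-hardness proof for LTL realizability uses formulas that constrain only a finite prefix of the interaction (encoding the computation of a Turing machine that halts), so any violation is witnessed on a finite prefix; every finite input prefix is a cylinder of positive measure under the uniform distribution, and hence for these particular formulas probability-$1$ satisfaction coincides with realizability. You write ``whenever violations are detectable on finite prefixes'' but never verify that the hardness instances have this property, and that verification is the entire content of the lower-bound argument. With that observation supplied, the reduction is immediate and your proof is complete.
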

\begin{proof}
Consider an \FLTL formula $\phi$. We want to find a transducer $\T$ that maximizes $\sem{\T,\phi}_s$. 
Let $\M_\A$ be the achievability MDP for $\phi$. By Theorem~\ref{thm:M to T shqsyn}, 
we can find in time polynomial in $\M_\A$ a memoryless strategy $f$ such that
$\sem{\T[\M_\A,f],\phi}_s$ maximizes $\set{\EE[\sem{\T,\phi}_s]}$.
Below we analyze the size of $\M_\A$. Let $|\phi|=k$. By Theorem~\ref{from abk}, we have that $n \leq 2^{k}$ and each $\A_i$ is of size at most $2^{2^{O(k)}}$. 
Thus, the size of $\M_\A$ is at most $(2^{2^{O(k)}})^{2^k}=2^{2^{O(k)}}$, implying the doubly exponential upper bound.

A matching lower bound for the respective decision problem follows from the 2EXPTIME hardness of standard LTL synthesis. Note that in our setting one considers satisfaction with probability $1$. Still, since the hardness proof for LTL synthesis considers the interaction between a system and its environment along a finite prefix of a computation (one that models the computation of a Turing machine that halts), it applies also for the stochastic setting. 
\hfill 
\end{proof}

\section{Adding an Almost-Sure Threshold}
\label{sec:Expected synthesis with threshold}
In this section we combine the stochastic and the almost-sure approaches. The {\em \synprob problem with a threshold} includes both an \FLTL formula $\phi$ and a threshold $t\in [0,1]$. The goal then is to maximize the expected satisfaction value of $\phi$ while guaranteeing that it is almost surely above $t$.
Formally, given $\phi$ and $t$, we seek a transducer $\T$ that maximizes $$\set{\sem{\T,\phi}_s: \sem{\T,\phi}_a \geq t}.$$
Note that there need not be a transducer $\T$ for which $\sem{\T,\phi}_a \geq t$, in which case the set is empty and we return no transducer. This is the multi-valued analogue of an unrealizable Boolean specification (except that here the user may want to try to reduce $t$). Note also that this sub-problem, of deciding whether the set is empty, amounts to solving the high-quality synthesis problem in the almost-sure approach. Finally, if the set is not empty, then we have to show, as in Section~\ref{sec:Expected synthesis problem}, that its maximum is indeed attained.

\begin{example}
\label{xpl:message sending}
{\rm 
Consider a server sending messages over a noisy channel. At each cycle, the server sends a message and 
needs to decide whether to encode it so that error-correction can retrieve it in case the channel is noisy,  or take a risk and send the message with no encoding.
Encoding a message has some cost. 
We
formulate the quality of each cycle by the specification $\psi$ over $I=\{{\it noise}\}$ and $O=\{{\it encode}\}$, where
$\psi=(\neg {\it noise} \wedge \neg {\it encode})\vee \factorU_{\frac34}{\it encode}.$
Thus, each cycle has satisfaction value $1$ if a message that is not encoded is sent over a non-noisy channel, and satisfaction value $\frac34$ if a message is encoded. Note that otherwise (that is, when a message that is not encoded is sent over a noisy channel), the satisfaction value is $0$. The factor $\frac34$ in the \FLTL specification reflects the priorities of the designer as induced by the actual cost of encoding and of losing messages.

Recall that $\psi$ specifies the quality of a single cycle. The quality of a full computation refers to its different cycles, and a natural thing to do is to take the average over the cycles we want to consider. Assume that a channel may be noisy only during the first four cycles. Then, the quality of a computation is  
$\phi=(\psi\avg{\frac12} \Next\psi) \avg{\frac12}(\Next\Next\psi\avg{\frac12}\Next\Next\Next\psi).$

Assume that the probability of a channel to be noisy in each of the first four cycles is $p$.
Consider a transducer $\T_1$ that does not encode any message. The expected satisfaction value of $\psi$ in each of the four cycles is then $(1-p)\cdot 1+p\cdot 0 = 1-p$, hence $\sem{\T_1,\phi}_s=1-p$. On the other hand, the satisfaction value of $\psi$ in a noisy cycle is $0$, hence $\sem{\T_1,\phi}_w=\sem{\T_1,\phi}_a=0$. 
Thus, if one does not care for a lower bound on the satisfaction value in the worst case, then by using $\T_1$ he gets an expected satisfaction value of $1-p$. 

Suppose now that we want the satisfaction value to be above $\frac13$ in the worst case. 
This can be achieved by a transducer $\T_2$ that encodes messages in two of the four cycles. Indeed, for the cycles in which a message is encoded, we get satisfaction value $\frac34$, which is averaged with $0$, namely the worst-case satisfaction value in the cycles in which a message is not encoded. Hence, $\sem{\T_2,\phi}_w=\sem{\T_2,\phi}_a=\frac34\avg{\frac12} 0=\frac38 > \frac13$. The expected satisfaction value of $\T_2$ is then $\sem{\T_2,\phi}_s = \frac34 \avg{\frac12} (1-p) =\frac{7}{8}-\frac{p}{2}$. 

Finally, if we want to ensure satisfaction value $\frac34$ in the worst case, then we can design a transducer $\T_3$  that encodes all the messages in the first four cycles. Now, $\sem{\T_3,\phi}_w=\sem{\T_3,\phi}_a=\sem{\T_3,\phi}_s=\frac34$. 

It follows that for a small $p$, adding a threshold on the satisfaction value in the worst case reduces the expected satisfaction value. Indeed, when $p <\frac14$, then $1-p > \frac{7}{8}-\frac{p}{2} > \frac34$. When, however, $p \geq \frac14$, then $\T_3$ is superior in the three approaches.
\hfill \qed
}
\end{example}

In order to solve the \synprob problem with a threshold, we modify our solution from Section~\ref{sec:Expected synthesis} as follows.
We start by deciding whether there exists a transducer $\T$ such that $\sem{\T,\phi}_a\ge t$. 
For this, we construct, per Theorem~\ref{from abk}, a DPW
$\A_{\ge t}=\zug{\tAP,Q^{\ge t},q_0^{\ge t},\delta^{\ge t},\alpha^{\ge t}}$ such that $L(\A_{\ge t})=\set{w: \sem{w,\phi}\ge t}$. Let $\M_{\ge t}$ be the parity-MDP induced from $\A_{\ge t}$. By~\cite{CHJ04a}, we can find the set of almost-sure winning states of $\M_{\ge t}$. If $q_0^{\ge t}$ is winning, then the required transducer exists, and in fact $\M_{\ge t}$ embodies all candidate transducers. We obtain a pre-automaton $\A'_{\ge t}$ from $\A_{\ge t}$ by restricting $\A_{\ge t}$ to winning states, and removing transitions from state $q\in Q^{\ge t}$ for every action $o\in \tOUT$ such that there exists $i\in \tIN$ for which $\delta^{\ge t}(q,i\cup o)$ is not a winning state.  

We proceed by constructing a product pre-automaton $\A$ that is similar to the one constructed in Section~\ref{sec:Expected synthesis}, except that takes $t$ and $\A'_{\ge t}$ into account, as follows.

Let $\ell=\arg\min_i\set{v_i: v_i\ge t}$ be the minimal index such that $v_i \ge t$. We define $\A=\A_\ell\times \ldots \times  \A_n\times 
\A'_{\ge t}$. That is, the product, defined as in Section~\ref{sec:Expected synthesis}, now contains only DPWs $\A_i$ for which $v_i \ge t$ and also contains $\A'_{\ge t}$. We obtain the MDP $\M_\A$ and set the reward function as in Section~\ref{sec:Expected synthesis}, taking into account only c.w.r. states from the automata $\A_\ell,...,\A_n$. The component $\A'_{\ge t}$ is only used to restrict the actions of the MDP $\M_\A$. 
We refer to $\M_\A$ as the {\em $t$-achievability MDP for $\phi$}.

We present an analogue to Theorem~\ref{thm:M to T shqsyn}. The proof appears in Appendix~\ref{apx:M to T shqsyn with threshold}.
\begin{theorem}
\label{thm:M to T shqsyn with threshold}
Consider an \FLTL formula $\phi$ and a threshold $t \in [0,1]$. Let $\M_\A$ be the $t$-achievability MDP for $\phi$. For every value $v \in [0,1]$, there exists a strategy $f$ in $\M_\A$ such that $\val_{\M_\A}(f) \geq v$ iff there exists an $I/O$-transducer $\T$ such that $\sem{\T,\phi}_a\ge t$ and $\sem{\T,\phi}_s \geq v$. 
Moreover, we can find in time polynomial in $\M_\A$ a memoryless strategy $f$ such that
$\sem{\T[\M_\A,f],\phi}_s$ maximizes $\set{\EE[\sem{\T,\phi}_s] : \sem{\T,\phi}_a \ge t}$.
\end{theorem}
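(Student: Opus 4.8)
The plan is to follow the blueprint of the proof of Theorem~\ref{thm:M to T shqsyn}: establish the two directions of the ``iff'' by translating between transducers and strategies while relating $\EE[X_{\T,\phi}]$ and $\EE[Y_{\T,\phi}]=\val_{\M_\A}(f)$, and then obtain memoryless optimality and polynomial-time computability from \cite{FV96}. The one genuinely new ingredient is the almost-sure constraint $\sem{\T,\phi}_a\ge t$. On the side of $\M_\A$ this constraint corresponds to restricting attention to strategies whose runs are, with probability $1$, eventually trapped in an end component $U$ whose value $v_U$ is at least $t$ --- equivalently, an end component whose projection onto $\A_{\ge t}$ is accepting. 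The component $\A'_{\ge t}$ built into the product already guarantees that every run stays forever in the almost-sure winning region of $\A_{\ge t}$ and uses only winning-preserving actions; recall that the reward of a state is the largest $v_i\ge t$ achievable from it, with ``bad'' end components, from which no $v_i\ge t$ is achievable, receiving reward $0$.

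For the direction from transducers to strategies, I would start from a transducer $\T$ with $\sem{\T,\phi}_a\ge t$ and $\sem{\T,\phi}_s\ge v$ and first argue that $\T$ respects the restriction encoded by $\A'_{\ge t}$. Reading $\T$ as a strategy in the parity-MDP $\M_{\ge t}$, the hypothesis $\sem{\T,\phi}_a\ge t$ says that $\T$ wins the parity objective of $\A_{\ge t}$ with probability $1$; since from a non-winning state every strategy loses with positive probability, $\T$ can neither reach a non-winning state with positive probability nor play an action that leaves the winning region with positive probability. Hence $\T$ induces a legal strategy $f$ in $\M_\A$ that is trapped, with probability $1$, in a good end component. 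The inequality $\EE[X_{\T,\phi}]\le \EE[Y_{\T,\phi}]$ proved for Theorem~\ref{thm:M to T shqsyn} then applies essentially verbatim, its argument being oblivious to the threshold once we are inside the winning region, giving $\val_{\M_\A}(f)=\EE[Y_{\T,\phi}]\ge \EE[X_{\T,\phi}]\ge v$.

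For the converse, I would take a strategy $f$ with $\val_{\M_\A}(f)\ge v$, assume it memoryless by \cite{FV96}, and produce a transducer $\T$ with $\sem{\T,\phi}_a\ge t$ and $\sem{\T,\phi}_s\ge v$, again borrowing from Theorem~\ref{thm:M to T shqsyn} the passage to a transducer $\T'$ with $\EE[X_{\T',\phi}]=\EE[Y_{\T,\phi}]$. The crux here, and the step I expect to be the \emph{main obstacle}, is that staying in the winning region does not by itself force $\sem{\T,\phi}_a\ge t$: the winning region may still contain ``losing'' end components (odd top rank in $\A_{\ge t}$, equivalently no achievable $v_i\ge t$), and a careless strategy could get trapped in one with positive probability, dropping the satisfaction value below $t$. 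I would resolve this by showing that one may assume, without loss of value, that $f$ reaches a good end component with probability $1$: from every state of $\A'_{\ge t}$ a good end component is reachable (this is exactly what membership in the winning region means), and redirecting the probability mass that $f$ sends into a bad end component (reward $0<t$) towards a reachable good one can only increase $\val_{\M_\A}(f)$. Once $f$ is trapped almost surely in good end components, every run of the induced transducer is accepting for $\A_{\ge t}$, so $\sem{\T,\phi}_a\ge t$, while $\sem{\T,\phi}_s=\EE[X_{\T',\phi}]=\val_{\M_\A}(f)\ge v$.

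Finally, the ``Moreover'' part follows because, by the redirection argument just sketched, the optimal value of $\M_\A$ is attained by a memoryless strategy $f$ that is trapped almost surely in good end components; such an $f$ is computable in time polynomial in $\M_\A$ by the mean-payoff analysis of \cite{FV96}, together with the controllably-win-recurrent analysis of \cite{CHJ04a} used to identify the good end components and their values. The induced transducer $\T[\M_\A,f]$ then maximizes $\set{\EE[\sem{\T,\phi}_s]:\sem{\T,\phi}_a\ge t}$, as required.
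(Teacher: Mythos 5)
Your proposal is correct and follows essentially the same route as the paper's proof: the forward direction is the Theorem~\ref{thm:M to T shqsyn} argument plus the observation that $\sem{\T,\phi}_a\ge t$ forces the induced strategy to respect the restriction of $\A'_{\ge t}$ and to reach, with probability $1$, an end component containing a c.w.r.\ state; the converse direction handles exactly the obstacle you identify, by noting that bad end components have reward $0$ so redirecting their probability mass (by switching to an almost-sure winning strategy of the parity-MDP $\M_{\ge t}$, available from every state of $\A'_{\ge t}$) cannot decrease the value while enforcing the almost-sure threshold. Your explicit justification that a transducer with $\sem{\T,\phi}_a\ge t$ never takes an action removed in $\A'_{\ge t}$ is a detail the paper leaves implicit, but it is the intended argument.
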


Since, by Theorem~\ref{from abk}, the size of $\A_{\ge t}$ is doubly exponential in $\phi$, then, by following considerations  similar to these specified in the proof of Theorem~\ref{thm:solving expected synthesis}, we conclude with the following.
\begin{theorem}
\label{thm:solving expected synthesis with threshold}
Solving the \synprob problem with a threshold for \FLTL can be done in doubly-exponential time. The corresponding decision problem is 2EXPTIME-complete.
\end{theorem}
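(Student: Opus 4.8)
The plan is to follow the template of the proof of Theorem~\ref{thm:solving expected synthesis}, since the genuine algorithmic content is already packaged in Theorem~\ref{thm:M to T shqsyn with threshold}: once the $t$-achievability MDP $\M_\A$ is built, a memoryless strategy $f$ maximizing $\set{\EE[\sem{\T,\phi}_s]:\sem{\T,\phi}_a\ge t}$ is computable in time polynomial in $\M_\A$, and the induced transducer $\T[\M_\A,f]$ is the desired witness (in particular, memoryless optimality shows the maximum is attained whenever the feasible set is nonempty). Hence the whole upper-bound argument reduces to verifying that constructing $\M_\A$, together with the auxiliary almost-sure step, stays within doubly-exponential time.

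For the upper bound I would first handle the emptiness test, which is exactly the almost-sure realizability question. By Theorem~\ref{from abk}, instantiated with the predicate $\theta=[t,1]$, the \DPW $\A_{\ge t}$ has at most $2^{2^{O(|\phi|)}}$ states, so the induced parity-MDP $\M_{\ge t}$ has doubly-exponential size; by~\cite{CHJ04a} its set of almost-sure winning states, and hence the restricted pre-automaton $\A'_{\ge t}$, is computable in time polynomial in $\M_{\ge t}$, i.e.\ in doubly-exponential time. If $q_0^{\ge t}$ is not winning I report that no transducer exists; otherwise I form the product $\A=\A_\ell\times\cdots\times\A_n\times\A'_{\ge t}$. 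Here the count is the only thing to check: there are at most $n\le 2^{|\phi|}$ factors $\A_i$, each of size $2^{2^{O(|\phi|)}}$ by Theorem~\ref{from abk}, plus the single factor $\A'_{\ge t}$ of the same size. Thus $\A$, and with it $\M_\A$, has at most $(2^{2^{O(|\phi|)}})^{2^{|\phi|}+1}=2^{2^{O(|\phi|)}}$ states, i.e.\ doubly-exponential in $|\phi|$; the extra factor $\A'_{\ge t}$ is absorbed into the exponent and does not change the bound. Feeding $\M_\A$ into Theorem~\ref{thm:M to T shqsyn with threshold} then yields $f$ and $\T[\M_\A,f]$ in doubly-exponential time overall.

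For the lower bound I would observe that taking $t=0$ makes the constraint $\sem{\T,\phi}_a\ge 0$ vacuous, so the \synprob problem with threshold $t=0$ is exactly the \synprob problem; hence its decision version inherits the 2EXPTIME-hardness established in Theorem~\ref{thm:solving expected synthesis} (ultimately from \LTL-synthesis hardness, via a finite-prefix interaction that is insensitive to the stochastic versus adversarial reading of the environment). Together with the matching doubly-exponential upper bound this gives 2EXPTIME-completeness. I do not expect a real obstacle here: all the subtlety --- the interplay of the expected-value objective with the almost-sure constraint and the correctness of the $t$-achievability MDP --- lives in Theorem~\ref{thm:M to T shqsyn with threshold}, which I am assuming; the present step is essentially a size audit confirming that adding the parity-MDP analysis for $\A_{\ge t}$ costs only a polynomial (in the doubly-exponential MDP) overhead and one additional product factor.
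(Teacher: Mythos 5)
Your proposal is correct and follows essentially the same route as the paper, which likewise reduces the theorem to Theorem~\ref{thm:M to T shqsyn with threshold} plus the observation that $\A_{\ge t}$ (and hence the $t$-achievability MDP) is doubly exponential in $\phi$, and then reuses the considerations from the proof of Theorem~\ref{thm:solving expected synthesis}. Your explicit size audit and your lower-bound reduction via $t=0$ (making the almost-sure constraint vacuous so that hardness is inherited from the unconstrained \synprob problem) are just slightly more detailed renderings of what the paper leaves implicit.
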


\begin{remark}
\label{rmk:threshold using existing tools}
In~\cite{CKK15,CR15}, the authors solve the problem of deciding, given an MDP $\M$ and two thresholds $v$ and $t$, whether there is a strategy $f$ for $\M$ that guarantees value $t$ almost surely, and has expected cost at least $v$. The solution can be directly applied to our setting. However, note that this solution only guarantees an expected cost of $v$, whereas our approach finds the optimal expected cost.
\end{remark}
\begin{remark}
In the \synprob problem with a threshold, we use the formula $\phi$ both for the expectation maximization, and for the almost-sure threshold. Sometimes, it is desirable to decompose the specification into one part -- $\psi$, which is a hard constraints and needs to be satisfied almost-surely above the threshold $t$, and another part -- $\phi$, which specifies a utility function with respect to which we would like to optimize \cite{BFRR14a,CR15}.   

Our solution can be easily adapted to handle this setting. Indeed, in the construction of the $t$-achievability MDP, we replace $\A_{\ge t}$, with $\B_{\ge t}$, where $L(\B_{\ge t})=\set{w: \sem{w,\psi}\ge t}$, and proceed with the described construction and the proofs.
\end{remark}

\section{Adding Environment Assumptions}
\label{sec:assume-guarantee}
A common paradigm in Boolean synthesis is synthesis with environment assumptions~\cite{CHJ08,LDS11}, where the input to the synthesis problem consists of a specification $\phi$ and an assumption $\psi$, and we seek a transducer that realizes $\phi$ under the assumption that the environment satisfies $\psi$. 
In this section we consider an analogue variant of the \synprob problem, where we are given an $\FLTL$ specification $\phi$ and an $\LTL$ assumption $\psi$, and we seek a transducer that maximizes the expected satisfaction value of $\phi$ given that the environment satisfies the assumption $\psi$. Note that while the specification is quantitative, the assumption is Boolean. 

\begin{example}
\label{xpl:message sending assumption}
{\rm 
Recall the message-sending server in Example~\ref{xpl:message sending}, and assume that the channel can change its status (noisy/non-noisy) only every second cycle. We use this assumption in order to design improved transducers. 
Formally, the assumption is given by the $\LTL$ formula $\psi=({\it noise}\leftrightarrow \Next{\it noise})\wedge \Next\Next({\it noise}\leftrightarrow \Next{\it noise})$.

The transducer $\T_4$ does not encode the first message, but checks whether the channel was noisy. If it was, the second message is encoded. We get that the expected satisfaction value of $\phi$ in $\T_4$ under the assumption is $(1-p+p\cdot\frac34+(1-p)\cdot 1)/2=1-\frac{5}{8}p$, which is higher than $1-p=\sem{\T_1,\phi}_s$ for every $p>0$. In addition, under the assumption we are guaranteed that the worst-case satisfaction value of $\T_4$ is at least $\frac38$, unlike $\T_1$ (in case the channel is noisy, so only the second and fourth messages are encoded). 
Thus $\T_4$ is superior to $\T_1$ described in Example~\ref{xpl:message sending} in the three approaches (under the assumption).

Next, as in Example~\ref{xpl:message sending}, if we want to ensure satisfaction value $\frac34$ in the worst case, we can design a transducer $\T_5$ that works like $\T_4$, except that it always encodes the first and third messages.
The expected satisfaction value of $\T_5$ under the assumption is
$(\frac34 + p\cdot \frac34 + (1-p)\cdot 1)/2=\frac78-\frac{p}{8}$, which is higher than $\frac34=\sem{\T_3,\phi}_s$, for every $p\in[0,1]$.

Thus, under the assumption, it is possible to design transducers that increase the expected satisfaction value as well as the lower bound. 
\hfill \qed
}
\end{example}

Formally, in the {\synprob problem with environment assumptions}, we get as input an \FLTL formula $\phi$ over $I\cup O$, and an {\em environment assumption} $\psi$, which is an $\LTL$ formula over $I$ such that $\Pr(\psi) > 0$. That is, the probability of the event $\{w : w \models \psi\} \subseteq (2^I)^\omega$ is strictly positive. Recall that $X_{\T,\phi}$ is a random variable such that $X_{\T,\phi}(w)=\sem{\T(w),\phi}$. We seek a transducer $\T$ that maximizes $\EE[X_{\T,\phi}|w\models \psi]$.

We start by citing a folklore lemma,
whose proof can be found in Appendix~\ref{apx:expectation}.
\begin{lemma}
\label{lem:condition on union with 0}
Consider a random variable $X$. Let $A,B$ be events such that $\Pr(A)>0$ and $\Pr(B)=0$. Then, $\EE[X|A\cup B]=\EE[X|A]$.
\end{lemma}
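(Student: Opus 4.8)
The plan is to unfold the definition of conditional expectation as the expectation of $X$ against an indicator, divided by a probability, and then show that the null event $B$ contributes nothing to either the numerator or the denominator. First I would recall that for any event $C$ with $\Pr(C)>0$ one has $\EE[X \mid C] = \EE[X \cdot \mathbbm{1}_C]/\Pr(C)$, where $\mathbbm{1}_C$ denotes the indicator random variable of $C$. To apply this with $C = A \cup B$ I must first note that $\Pr(A \cup B)>0$, which is immediate since $A \cup B \supseteq A$ and $\Pr(A)>0$, so the conditional expectation $\EE[X \mid A\cup B]$ is well defined.

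Next I would treat the denominator and the numerator separately. For the denominator, writing the disjoint decomposition $A \cup B = A \cup (B \setminus A)$ gives $\Pr(A \cup B) = \Pr(A) + \Pr(B \setminus A)$, and since $B \setminus A \subseteq B$ we have $\Pr(B \setminus A) \le \Pr(B) = 0$; hence $\Pr(A \cup B) = \Pr(A)$. For the numerator, the same disjoint decomposition yields $\mathbbm{1}_{A \cup B} = \mathbbm{1}_A + \mathbbm{1}_{B \setminus A}$, so by linearity $\EE[X \cdot \mathbbm{1}_{A \cup B}] = \EE[X \cdot \mathbbm{1}_A] + \EE[X \cdot \mathbbm{1}_{B \setminus A}]$. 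The crux is that the last term vanishes, because $X$ is integrated against the indicator of a probability-zero event. Combining the two computations and dividing, $\EE[X \mid A \cup B] = \EE[X \cdot \mathbbm{1}_A]/\Pr(A) = \EE[X \mid A]$, as required.

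The one step I would treat with care — and the only place where any hypothesis beyond $\Pr(B)=0$ is implicitly needed — is the claim that $\EE[X \cdot \mathbbm{1}_{B \setminus A}] = 0$. This is precisely the statement that the expectation of $X$ over a null set is zero, which is the step that could fail for a pathological non-integrable $X$. In our application this is not an issue: the relevant random variable is $X_{\T,\phi}$, which takes values in $[0,1]$ and is therefore bounded, so its integral over any probability-zero event is zero. Everything else is routine manipulation of indicators and elementary probability identities, so the entire argument is short once the integrability of $X$ is noted.
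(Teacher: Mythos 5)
Your proof is correct, but it takes a different route from the paper's. The paper argues at the level of conditional \emph{distributions}: it first observes that $\Pr(D\mid A\cup B)=\Pr(D\mid A)$ for every event $D$ (since $\Pr(B)=0$), deduces that the conditional distribution function $f_{X\mid A\cup B}$ coincides with $f_{X\mid A}$, and then concludes that the two conditional expectations, being integrals of $x$ against the same function, are equal. You instead unfold $\EE[X\mid C]=\EE[X\cdot\mathbbm{1}_C]/\Pr(C)$ and show separately that the denominator is unchanged ($\Pr(A\cup B)=\Pr(A)$) and that the numerator is unchanged ($\EE[X\cdot\mathbbm{1}_{B\setminus A}]=0$ because $B\setminus A$ is null). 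The two arguments hinge on the same underlying fact --- a null event contributes nothing --- but yours has a mild advantage in rigor: the paper's proof implicitly assumes $X$ admits a conditional density $f_{X\mid C}$, which is awkward for the random variables actually used here ($X_{\T,\phi}$ takes finitely many values, so no Lebesgue density exists; one would have to reinterpret the integrals against a general distribution). Your version sidesteps densities entirely and needs only that the integral of $X$ over a null set vanishes, a point you correctly flag and correctly discharge since $X_{\T,\phi}$ is bounded in $[0,1]$. The paper's version, in exchange, isolates the purely probabilistic identity $\Pr(\cdot\mid A\cup B)=\Pr(\cdot\mid A)$, which is arguably the more reusable statement. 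Either proof is acceptable here.
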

Before proceeding, we note that if $\Pr(\psi)=1$, then we can proceed by dropping the assumption entirely. Indeed, it holds that 
$\Pr(\neg\psi)=0$, and by Lemma~\ref{lem:condition on union with 0}, we have that $\EE[X_{\T,\phi}|w\models \psi]=\EE[X_{\T,\phi}|(w\models \psi)\cup (w\models\neg \psi)]=\EE[X_{\T,\phi}|(2^I)^\omega]=\EE[X_{\T,\phi}]$. Thus, we henceforth assume that $0<\Pr(\psi)<1$.

As mentioned in Section~\ref{sec:intro}, maximizing the conditional expectation directly is notoriously problematic, as, unlike unconditional expectation, it is not a linear objective. Thus, it is not susceptible to linear optimization techniques, which are the standard approach to find maximizing strategies in MDPs. 
Our solution is a modification of the construction from Section~\ref{sec:Expected synthesis} in which we, intuitively, ``redistribute'' the probability of the input sequences that do not satisfy the assumption. 
We start by constructing a DPW $\A_\psi$ that accepts a word $w\in (2^I)^\omega$ iff $w\models \psi$. Note that the alphabet of $\A$ is $\tIN$. We think of this alphabet as $\tAP$, where transitions simply ignore the $\tOUT$ component. In particular, the MDP $\M_{\A_\psi}$ is in fact an MC. We say that an ergodic component of $\M_{\A_\psi}$ is {\em rejecting} if the maximal rank that appears in it is odd. It is easy to see that a run in a rejecting ergodic component is accepting w.p. 0.

We then consider the automaton $\A=\A_\psi\times \A_1\times\ldots\times \A_n$, and obtain the MDP 
$\M_\A=\zug{S,s_0,2^O,\MDPProb,\MDPre}$ as described in Section~\ref{sec:Expected synthesis}. In particular, the reward function is as there, and the only change is the addition of the $\A_\psi$ component, which provides information about satisfaction of $\psi$. We refer to $\M_\A$ as the {\em conditional achievability MDP for $\phi$ given $\psi$}.
Recall that for a strategy $f$, we have defined $R_{\M_A,f}$ as a random variable whose value is the reward on runs in ${\M_\A}$ with strategy $f$. 
Following the proof of Theorem~\ref{thm:M to T shqsyn}, we then get the following.

\begin{theorem}
\label{thm:M to T shqsyn conditional}
Consider an \FLTL formula $\phi$ and an environment assumption $\psi$. Let $\M_\A$ be the conditional achievability MDP for $\phi$ given $\psi$. For every value $v \in [0,1]$, there exists a strategy $f$ in $\M_\A$ such that
$\EE[R_{\M_\A,f}|w\models \psi]\ge v$
iff there exists an $I/O$-transducer $\T$ such that $\EE[X_{\T,\phi}|w\models \psi]\ge v$. 
Moreover, if $f$ is memoryless, then we can find in time polynomial in $\M_\A$ a memoryless strategy $f$ such that
$\EE[X_{\T[\M_\A,f],\phi}|w\models \psi]\ge v$.
\end{theorem}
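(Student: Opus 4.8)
The plan is to mirror the proof of Theorem~\ref{thm:M to T shqsyn}, exploiting the one structural feature that makes the conditional setting tractable here: the conditioning event $\{w : w \models \psi\} \subseteq \tINo$ depends only on the input word $w$ and is therefore \emph{independent of the chosen strategy or transducer}. Concretely, since $\A_\psi$ ignores the $\tOUT$-component of its alphabet, the $\A_\psi$-coordinate of a run of $\M_\A$ is a function of $w$ alone, so for a fixed $w$ the truth of $w\models\psi$ is settled before any output is produced. Consequently both $\Pr(w\models\psi)$ (which is positive and fixed) and the partition of the end components of $\M_\A$ into $\psi$-accepting and $\psi$-rejecting ones — according to whether the $\A_\psi$-projection is an accepting ergodic component of the MC $\M_{\A_\psi}$ — are strategy-independent. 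This lets me lift the two implications of Theorem~\ref{thm:M to T shqsyn}, which were established through relations between $X_{\T,\phi}$ and $Y_{\T,\phi}$, to their conditional counterparts, since conditioning on a fixed event preserves pointwise inequalities as well as any relation established per end component.

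For the first implication, I would start from an arbitrary transducer $\T$ with $\EE[X_{\T,\phi}\mid w\models\psi]\ge v$ and form, exactly as in Theorem~\ref{thm:M to T shqsyn}, the (history-dependent) strategy $f$ on $\M_\A$ that at each step plays the $\tOUT$-label dictated by the current state of $\T$. This makes the run of $\M_\A$ track the run of $\A$ on $\T(w)$, so that $R_{\M_\A,f}(w)=Y_{\T,\phi}(w)$ and the event $\{w\models\psi\}$ is literally the same event on both sides. The pointwise inequality $X_{\T,\phi}(w)\le Y_{\T,\phi}(w)$ proved in Appendix~\ref{app thm:M to T shqsyn} — the actual value $\sem{\T(w),\phi}$ is always achievable from the end component trapping the run, hence at most $\MDPre$ there — survives conditioning, giving $\EE[R_{\M_\A,f}\mid w\models\psi]=\EE[Y_{\T,\phi}\mid w\models\psi]\ge\EE[X_{\T,\phi}\mid w\models\psi]\ge v$.

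For the converse, I would take the memoryless $f$ with $\EE[R_{\M_\A,f}\mid w\models\psi]\ge v$ and apply the witness-strategy construction of Appendix~\ref{app thm:M to T shqsyn}: follow $f$ until the run is trapped in an end component $U$ of $\M_\A$, then switch to the finite-memory strategy that, staying inside $U$, forces the run of $\A_i$ to be accepting for the index $i$ with $v_i=v_U$, so that the \emph{actual} satisfaction value equals $v_U$ with probability $1$ on runs reaching $U$. The point to verify is that this switch does not disturb the conditioning: since the change affects only outputs, and the $\A_\psi$-coordinate is constant throughout $U$, runs trapped in $U$ keep their $\psi$-status, so the joint distribution over pairs (value $v_U$, $\psi$-status of $U$) is the same for the resulting transducer $\T'$ as for $f$. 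Hence $\EE[X_{\T',\phi}\mid w\models\psi]=\EE[R_{\M_\A,f}\mid w\models\psi]\ge v$, and $\T'$ is finite-state of size polynomial in $\M_\A$, yielding the ``moreover'' claim.

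The main obstacle I anticipate is precisely this last bookkeeping step: checking that re-routing outputs inside an end component preserves the conditional law. It is tempting but wrong to argue as if value and $\psi$-status were independent; what saves the argument is that the $\psi$-status is a deterministic function of the $\A_\psi$-ergodic component reached, which is frozen once the run enters $U$ and is untouched by the output strategy. I would also stress what this theorem does \emph{not} do: it establishes the strategy/transducer correspondence for a fixed threshold $v$, but not the \emph{maximization} of the conditional expectation — the non-linearity flagged in Section~\ref{sec:intro} becomes an obstacle only there, and is handled separately by the probability-redistribution construction rather than in this statement.
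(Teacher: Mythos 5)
Your proposal is correct and follows essentially the same route as the paper, which proves this theorem simply by noting that the argument of Theorem~\ref{thm:M to T shqsyn} carries over; the extra observation you supply --- that the event $\{w\models\psi\}$ depends only on the input word (the $\A_\psi$-coordinate ignores outputs), so the pointwise a.s.\ relations $X_{\T,\phi}\le Y_{\T,\phi}$ and $X_{\T',\phi}=Y_{\T',\phi}=R_{\M_\A,f}$ survive conditioning on a positive-probability, strategy-independent event --- is precisely the justification the paper leaves implicit. Your closing remark that the theorem only transfers a fixed threshold $v$ and that the actual maximization of the conditional expectation is deferred to the reset construction of Lemmas~\ref{lem:M' to M} and~\ref{lem:M to M'} also matches the paper's structure.
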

Theorem~\ref{thm:M to T shqsyn conditional} enables us to reason about $\M_\A$, but we are still left with conditional expectations. To handle the latter, 
we follow a technique suggested in \cite{BKKM14} and obtain from $\M_\A$ a new MDP ${\M'_\A}=\zug{S,s_0,A,\MDPProb',\MDPre}$ as follows. A state $s=\zug{q,q_1,...,q_n}$ of $\M_{\A}$ is called a {\em rejecting ergodic state} if its state $q$ of $\A_\psi$ belongs to a rejecting ergodic component of $\M_{\A_\psi}$. 
Let $\baderg=\{s:s\text{ is a rejecting ergodic state}\}$. 

For every state $s\in \baderg$ we set $\MDPProb'(s,a,s_0)=1$. That is, whenever a rejecting ergodic component of $\A_\psi$ is reached, the MDP ${\M'_\A}$ deterministically resets back to $s_0$.

Intuitively, when a rejecting ergodic component of $\A_\psi$ is reached, then the probability of $\psi$ being satisfied is $0$. Thus, resetting ``redistributes'' the probability of $\psi$ not being satisfied evenly. 
Below we formalize this intuition. The proofs can be found in Appendices~\ref{apx:M' to M} and~\ref{apx:M to M'}.
\begin{lemma}
\label{lem:M' to M}
Let $v\in \RR$, and consider a memoryless strategy $g$ in ${\M'_\A}$ such that $\val_{{\M'_\A}}(g)\ge v$. There exists a memoryless strategy $f$ in $\M_\A$ such that $\EE[R_{\M_\A,f}|w\models\psi]\ge v$. 
Moreover, $f$ can be computed from $g$ in polynomial time.
\end{lemma}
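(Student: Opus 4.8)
The plan is to take $f=g$ directly. Since $\M_\A$ and ${\M'_\A}$ have the same state space and the same available actions, and ${\M'_\A}$ differs from $\M_\A$ only in that it redirects every state of $\baderg$ deterministically to $s_0$, the memoryless strategy $g$ is also a legal memoryless strategy for $\M_\A$; thus $f:=g$ is memoryless and is computed from $g$ by the identity, hence in polynomial time. It then suffices to prove the value identity $\EE[R_{\M_\A,f}\mid w\models\psi]=\val_{{\M'_\A}}(g)$, which immediately yields the claimed bound $\ge v$. Denote by $N$ and $N'$ the Markov chains induced by $g$ in $\M_\A$ and in ${\M'_\A}$, respectively.

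First I would record the structural facts. Because the $\A_\psi$-component of $\A=\A_\psi\times\A_1\times\cdots\times\A_n$ reads only the $\tIN$-part of each letter and ignores the output chosen by the strategy, its trajectory in $N$ is an autonomous copy of the Markov chain $\M_{\A_\psi}$, independent of $g$. In particular $\Pr(\psi)$ is strategy-independent and equals the probability $p$ that $\M_{\A_\psi}$ is absorbed in an accepting (max rank even) ergodic component; by assumption $p=\Pr(\psi)\in(0,1)$. Consequently every ergodic component $C$ of $N$ projects onto an ergodic component of $\M_{\A_\psi}$, and $C\subseteq\baderg$ exactly when this projection is rejecting, in which case a run absorbed in $C$ satisfies $\psi$ with probability $0$; otherwise $C$ is \emph{good} and such a run satisfies $\psi$ with probability $1$ and never visits $\baderg$. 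Recalling from the construction of the achievability MDP that each end component carries a single reward value, write $\val(C)=\MDPre(s)$ for $s\in C$.

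Next I would analyze $N'$. The reset edges break every \emph{bad} component, while from $s_0$ a run reaches a good (hence absorbing in $N'$) component with probability $p>0$; therefore the bottom strongly connected components of $N'$ are exactly the good ergodic components of $N$, and no new one is created. Since $g$ is memoryless and every reset returns to the fixed state $s_0$, a run of $N'$ is a sequence of i.i.d. attempts, each an independent run of $N$ from $s_0$, which succeeds (reaches a good component) with probability $p$ and otherwise hits $\baderg$ and resets. Hence, with probability $1$, after finitely many resets the run is absorbed in some good component $C$, and its mean-payoff equals $\val(C)$, the finite transient prefix being averaged out. A renewal computation shows the absorbing component is distributed as $\Pr_N(\text{reach }C)/p$ over good $C$, so $\val_{{\M'_\A}}(g)=\sum_{C\text{ good}}\frac{\Pr_N(\text{reach }C)}{p}\,\val(C)$.

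Finally I would match this with the conditional expectation in $N$. As runs reaching a good $C$ satisfy $\psi$ and have reward $\val(C)$ with probability $1$, whereas runs reaching a bad component satisfy $\psi$ with probability $0$ (and thus contribute $0$ to $\EE[R_{\M_\A,f}\,\mathbbm{1}_{\psi}]$), we obtain $\EE[R_{\M_\A,f}\,\mathbbm{1}_{\psi}]=\sum_{C\text{ good}}\Pr_N(\text{reach }C)\,\val(C)$ and $\Pr(\psi)=p$; dividing gives $\EE[R_{\M_\A,f}\mid w\models\psi]=\val_{{\M'_\A}}(g)\ge v$. I expect the main obstacle to be exactly this identification of an unconditional long-run average, which standard MDP theory lets us optimize, with a conditional expectation: the crux is the renewal argument showing that the reset-to-$s_0$ reweights the good absorbing components by precisely the factor $1/\Pr(\psi)$, together with the autonomy of $\A_\psi$, which makes $\Pr(\psi)$ positive and strategy-independent and thereby guarantees that the resets terminate almost surely.
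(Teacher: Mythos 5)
Your proof is correct and rests on the same two pillars as the paper's own argument: the reset to $s_0$ acts as a renewal that reweights the surviving runs by exactly $1/\Pr(\psi)$, and the event of never reaching $\baderg$ coincides, up to null sets, with the event $w\models\psi$. The difference is only presentational --- you unroll the renewal explicitly as a geometric sum over the good ergodic components, whereas the paper derives the same identity $\EE[R_{\M_\A,f}\mid w\models\psi]=\val_{{\M'_\A}}(g)$ abstractly via the law of total expectation applied to the event $B$ of reaching $\baderg$ together with Lemma~\ref{lem:condition on union with 0}.
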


\begin{lemma}
\label{lem:M to M'}
Let $v\in \RR$, and consider a strategy $f$ in $\M_\A$ such that $\EE[R_{\M_\A,f}|w\models\psi]\ge v$.  There exists a strategy $g$ in ${\M'_\A}$ such that $\EE[R_{{\M'_\A},g}]\ge v$.
\end{lemma}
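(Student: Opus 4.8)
The plan is to establish the direction converse to Lemma~\ref{lem:M' to M}: given a strategy $f$ in $\M_\A$ whose conditional expectation $\EE[R_{\M_\A,f}\mid w\models\psi]\ge v$, produce a strategy $g$ in ${\M'_\A}$ whose (unconditional) expected mean-payoff value $\EE[R_{{\M'_\A},g}]\ge v$. The natural first move is to take $g$ to agree with $f$ on all states outside $\baderg$, since the two MDPs $\M_\A$ and ${\M'_\A}$ differ only in the transitions out of rejecting ergodic states (where ${\M'_\A}$ deterministically resets to $s_0$). On $\baderg$ the value of $g$ is irrelevant to the reset, so I may let $g$ copy $f$ there as well. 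The key point is that under $f$, a run of $\M_\A$ that enters a rejecting ergodic component of the $\A_\psi$-projection satisfies $\psi$ with probability $0$; these are exactly the runs that conditioning on $w\models\psi$ discards. In ${\M'_\A}$, rather than being discarded, such runs are restarted from $s_0$, so the reset mechanism effectively re-samples a fresh trajectory, and the contribution of a reset run to the long-run average reward should match the conditional average reward of a non-reset run.

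The heart of the argument is a renewal/stationary-distribution computation. First I would observe that in ${\M'_\A}_g$ every run visits $s_0$ infinitely often: either it gets trapped in an end component disjoint from $\baderg$ (corresponding to a run that satisfies $\psi$ with positive probability, on which the mean-payoff equals the achievable reward $v_U$), or it hits $\baderg$ and is reset, after which it starts afresh from $s_0$. Thus ${\M'_\A}_g$ decomposes into i.i.d.\ renewal cycles delimited by successive visits to $s_0$, where each cycle either terminates by absorption into a ``good'' end component (a $\psi$-satisfying one) or by a reset. Since the mean-payoff reward only depends on the limiting end component, a cycle that ends in a reset contributes nothing to the eventual reward, while a cycle that gets absorbed contributes the value of that end component. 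By the strong law of large numbers for renewal-reward processes, the almost-sure mean-payoff value of ${\M'_\A}_g$ equals the expected absorbed reward conditioned on eventual absorption into a good component. I would then identify this conditional expectation with $\EE[R_{\M_\A,f}\mid w\models\psi]$: the event ``eventually absorbed into a $\psi$-satisfying end component'' in ${\M'_\A}_g$ corresponds precisely to $w\models\psi$ in $\M_{\A,f}$, because the $\A_\psi$-component of a run tracks acceptance of $\psi$, and a run satisfies $\psi$ with positive probability iff it avoids rejecting ergodic components. Hence $\EE[R_{{\M'_\A},g}]=\EE[R_{\M_\A,f}\mid w\models\psi]\ge v$.

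I expect the main obstacle to be making the renewal argument rigorous in the general (non-memoryless) case, since Lemma~\ref{lem:M to M'} is stated for an arbitrary strategy $f$, and a general $f$ in $\M_\A$ may use unbounded memory, so $g$ inherits that memory and $\M'_{\A,g}$ need not be a finite Markov chain with clean i.i.d.\ renewal cycles. The cleanest route is probably to reduce to the memoryless case first: appeal to the fact (as in the proof of Theorem~\ref{thm:M to T shqsyn}) that the supremum of $\EE[R_{\M_\A,f}\mid w\models\psi]$ over all strategies is attained by, or approximated by, a strategy for which the reset-conditioning identity is transparent, and then transfer. Alternatively, one can argue directly via the tail-event characterization: both quantities depend only on the limiting ergodic/end component reached, so one can replace the renewal-cycle bookkeeping by a comparison of the two conditional distributions over terminal end components, using that resetting only reallocates the probability mass of $\psi$-violating runs without changing the relative probabilities among $\psi$-satisfying terminal components. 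The delicate bookkeeping is to verify that the reset preserves these relative probabilities --- this is exactly the ``redistributes the probability evenly'' intuition stated before the lemma, and it is where Lemma~\ref{lem:condition on union with 0} and the fact that $0<\Pr(\psi)<1$ are used to rule out degenerate normalizations.
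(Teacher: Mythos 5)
Your overall route is the paper's: build $g$ from $f$ so that the two processes coincide until the event $B$ of hitting $\baderg$, note that the mean payoff is a tail function so a reset run contributes a fresh unconditional sample, and then compute
$\EE[R_{{\M'_\A},g}]=\EE[R_{{\M'_\A},g}\mid\overline{B}]=\EE[R_{\M_\A,f}\mid\overline{B}]=\EE[R_{\M_\A,f}\mid w\models\psi]$,
where the first equality comes from the law of total expectation together with $\EE[R_{{\M'_\A},g}\mid B]=\EE[R_{{\M'_\A},g}]$, and the last from the same null-set partition (Lemma~\ref{lem:condition on union with 0}) used for Lemma~\ref{lem:M' to M}. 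Your renewal-reward cycles are this identity in different clothing, so the core of the argument is sound and matches the paper.

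The one issue you flag but do not resolve --- what $g$ does after a reset when $f$ is history-dependent --- is exactly the point the paper settles by construction rather than by reduction: after each reset, $g$ replays $f$ \emph{from the empty history}. This single stipulation is what makes the successive cycles i.i.d.\ for an \emph{arbitrary} $f$ and what licenses $\EE[R_{{\M'_\A},g}\mid B]=\EE[R_{{\M'_\A},g}]$; merely ``letting $g$ copy $f$'' with the accumulated history would not. Of your two proposed fallbacks, the first (reduce to memoryless $f$) is the step that would actually fail: the objective $\EE[R_{\M_\A,f}\mid w\models\psi]$ is a conditional expectation, for which memoryless optimality is not available --- this non-linearity is the stated reason for passing to ${\M'_\A}$ at all --- and the lemma must hold for arbitrary $f$ so that the (memoryless-attainable) optimum of ${\M'_\A}$ dominates the conditional optimum of $\M_\A$. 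Your second fallback, comparing the induced distributions over terminal end components, is correct but unnecessary once $g$ carries the explicit restart.
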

%
 
Finally, using Theorem~\ref{thm:M to T shqsyn conditional}, and the fact that $\A_\psi$ is doubly exponential in $\psi$, we can use the same reasoning as in the proof of Theorem~\ref{thm:solving expected synthesis} and conclude with the following. The proof can be found in Appendix~\ref{apx:solving expected synthesis conditional}.
\begin{theorem}
\label{thm:solving expected synthesis conditional}
Solving the \synprob problem with environment assumptions 
can be done in doubly-exponential time. The corresponding decision problem is 2EXPTIME-complete.
\end{theorem}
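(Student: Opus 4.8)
The plan is to establish the complexity of the \synprob problem with environment assumptions by combining the structural results already proved (Theorem~\ref{thm:M to T shqsyn conditional} together with Lemmas~\ref{lem:M' to M} and~\ref{lem:M to M'}) with a size analysis of the conditional achievability MDP. First I would argue correctness: by Theorem~\ref{thm:M to T shqsyn conditional}, finding a transducer $\T$ maximizing $\EE[X_{\T,\phi}\mid w\models\psi]$ reduces to finding a memoryless strategy in $\M_\A$ maximizing the conditional expectation $\EE[R_{\M_\A,f}\mid w\models\psi]$. Lemmas~\ref{lem:M' to M} and~\ref{lem:M to M'} then show that this conditional-expectation objective on $\M_\A$ coincides with the \emph{unconditional} mean-payoff value of the reset MDP $\M'_\A$: a strategy achieving value $v$ in $\M'_\A$ yields (in polynomial time) a strategy in $\M_\A$ with conditional expectation at least $v$, and conversely. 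Hence the optimum is attained by a memoryless strategy and, by the standard fact cited after the definition of $\val(\M)$ (see~\cite{FV96}), the optimal value and witnessing strategy of the mean-payoff MDP $\M'_\A$ can be computed in time polynomial in $\M'_\A$.

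The second step is the size estimate, following the template of the proof of Theorem~\ref{thm:solving expected synthesis}. Let $|\phi|=k$ and $|\psi|=m$. By Theorem~\ref{from abk}, we have $n\le 2^k$, each $\A_i$ has at most $2^{2^{O(k)}}$ states, and $\A_\psi$ (a DPW for the \LTL formula $\psi$) has at most $2^{2^{O(m)}}$ states. The conditional achievability MDP is built on the product $\A=\A_\psi\times\A_1\times\cdots\times\A_n$, whose state space has size at most $2^{2^{O(m)}}\cdot\bigl(2^{2^{O(k)}}\bigr)^{2^k}=2^{2^{O(k+m)}}$, and the reset construction of $\M'_\A$ does not change the state space. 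Since the strategy is computed in time polynomial in $\M'_\A$, the whole procedure runs in doubly-exponential time. For the matching lower bound, 2EXPTIME-hardness is inherited directly from classical \LTL synthesis exactly as in Theorem~\ref{thm:solving expected synthesis}: take a trivial assumption (for instance $\psi=\True$, whose probability is $1$, so that by the reduction noted before Lemma~\ref{lem:condition on union with 0} the conditional problem collapses to plain \synprob, which is already 2EXPTIME-hard), so no new hardness argument is needed.

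The step I expect to require the most care is not in this theorem itself but in the lemmas it invokes, which I would therefore double-check rather than take for granted. The delicate point is that the reset transformation must genuinely preserve the conditional-expectation objective: one must verify that redirecting every rejecting ergodic state to $s_0$ with probability $1$ reproduces, as an unconditional mean-payoff, the value conditioned on $w\models\psi$ — i.e.\ that the ``redistributed'' probability mass of runs violating $\psi$ is apportioned correctly and does not spuriously inflate or deflate the average reward. Because a run trapped in a rejecting ergodic component satisfies $\psi$ with probability $0$, such runs contribute nothing to the conditional event, and the reset effectively restarts an independent trial; the mean-payoff over the resulting recurrent behaviour of $\M'_\A$ equals the expected reward restricted to $\psi$-satisfying runs. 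Provided Lemmas~\ref{lem:M' to M} and~\ref{lem:M to M'} are sound (as assumed here), the present theorem is then immediate from correctness plus the size bound, so in the body of this proof I would simply assemble these ingredients and defer the detailed mean-payoff bookkeeping to Appendix~\ref{apx:solving expected synthesis conditional}.
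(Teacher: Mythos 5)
Your proposal is correct and follows essentially the same route as the paper: reduce via Theorem~\ref{thm:M to T shqsyn conditional} to maximizing the conditional expectation over $\M_\A$, use Lemmas~\ref{lem:M' to M} and~\ref{lem:M to M'} to identify that optimum with the unconditional mean-payoff value of the reset MDP $\M'_\A$ (solvable by a memoryless strategy in time polynomial in the doubly-exponential-size MDP), and inherit the 2EXPTIME lower bound from plain \synprob. Your explicit instantiation of the lower bound with the trivial assumption $\psi=\True$ is a harmless elaboration of what the paper leaves implicit.
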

%

\section{Extensions}
\label{sec:extensions}
In this section we describe two extensions to the setting. The first combines the threshold and assumption extensions presented in Sections~\ref{sec:Expected synthesis with threshold} and~\ref{sec:assume-guarantee}. The second shows how to handle a non-uniform probability distribution.
\subsection{Combining an Almost-Sure Threshold with Environment Assumptions}
Combining an almost-sure threshold with environment assumptions requires some subtlety in the definitions. As an input for the problem, we are given an \LTLF formula $\phi$ over $I\cup O$, an \LTL environment assumption $\psi$ over $I$ such that $\Pr(\psi)>0$, and a threshold $t\in [0,1]$. Then, we seek a a transducer $\T$ that maximizes $\EE[X_{\T,\phi}|w\models \psi]$ and for which $\Pr(\sem{\T(w),\phi}\ge t|w\models \psi)=1$. In particular, the threshold $t$ should be attained almost surely only in computations that satisfy $\psi$. 

\begin{remark}
{\rm
Note that it could have also been possible to seek a transducer $\T$ that maximizes $\EE[X_{\T,\phi}|w\models \psi]$ and for which $\Pr(\sem{\T(w),\phi}\ge t)=1$, namely for which the threshold should hold almost surely regardless of the assumption. We found this approach less appealing. Its solution, however, is a straightforward combination of our constructions. That is, we start with the product $\A_{\ell}\times\ldots \times \A_n\times \A'_{\ge t}\times \A_\psi$, as defined in Sections~\ref{sec:Expected synthesis with threshold} and \ref{sec:assume-guarantee}, apply the reset modification described in Section~\ref{sec:assume-guarantee}, and seek a maximizing strategy in the resulting MDP. 
\hfill \qed}
\end{remark}
We solve the problem as follows. We start by checking whether there exists a transducer $\T$ such that $\Pr(\sem{\T(w),\phi}\ge v|w\models \psi)=1$, using the following lemma (see Appendix~\ref{apx:conditional threshold} for the proof). 
\begin{lemma}
\label{lem: conditional threshold}
Let $\phi,\psi$, and $t$ be as above. For every transducer $\T$ it holds that $\Pr(\sem{\T(w),\phi}\ge t|w\models \psi)=1$ iff $\Pr(\sem{\T(w), \psi \to \phi}\ge t)=1$.
\end{lemma}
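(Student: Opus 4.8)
~\textbf{Proof plan.} The statement I want to prove is the equivalence, for every transducer $\T$, between the conditional almost-sure guarantee $\Pr(\sem{\T(w),\phi}\ge t \mid w\models \psi)=1$ and the unconditional guarantee $\Pr(\sem{\T(w),\psi\to\phi}\ge t)=1$. The natural approach is to reason pointwise on input sequences $w\in\tINo$, partitioning the sample space according to whether $w\models\psi$, and then to relate the satisfaction value $\sem{\T(w),\psi\to\phi}$ to $\sem{\T(w),\phi}$ on each part using the $\LTLF$ semantics of the implication $\psi\to\phi=\max\{1-\sem{\cdot,\psi},\sem{\cdot,\phi}\}$. The crucial observation is that $\psi$ is a Boolean $\LTL$ formula, so $\sem{\T(w),\psi}\in\{0,1\}$ for every $w$.

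First I would split on the value of $\sem{\T(w),\psi}$. When $w\models\psi$, we have $\sem{\T(w),\psi}=1$, so $1-\sem{\T(w),\psi}=0$ and hence $\sem{\T(w),\psi\to\phi}=\max\{0,\sem{\T(w),\phi}\}=\sem{\T(w),\phi}$. Thus on the event $\{w\models\psi\}$ the two quantities coincide, and in particular $\sem{\T(w),\psi\to\phi}\ge t$ iff $\sem{\T(w),\phi}\ge t$. When $w\not\models\psi$, we have $\sem{\T(w),\psi}=0$, so $1-\sem{\T(w),\psi}=1$ and therefore $\sem{\T(w),\psi\to\phi}=\max\{1,\sem{\T(w),\phi}\}=1\ge t$ automatically, for any $t\in[0,1]$. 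So the event $\{\sem{\T(w),\psi\to\phi}\ge t\}$ decomposes exactly as $\{w\models\psi \,:\, \sem{\T(w),\phi}\ge t\}\cup\{w\not\models\psi\}$.

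With this decomposition the equivalence follows by elementary probability. Writing $A=\{w:w\models\psi\}$, so that $\Pr(A)>0$ by the standing assumption on $\psi$, the left-hand side says $\Pr(\{\sem{\T(w),\phi}\ge t\}\mid A)=1$, equivalently $\Pr(\{\sem{\T(w),\phi}<t\}\cap A)=0$. The right-hand side says $\Pr(\{\sem{\T(w),\psi\to\phi}\ge t\})=1$, i.e. $\Pr(\{\sem{\T(w),\psi\to\phi}<t\})=0$; but by the case analysis above the event $\{\sem{\T(w),\psi\to\phi}<t\}$ is precisely $\{\sem{\T(w),\phi}<t\}\cap A$ (outside $A$ the value is $1\ge t$). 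Hence both sides assert that the single event $\{\sem{\T(w),\phi}<t\}\cap A$ has probability $0$, and they are therefore equivalent.

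I expect the proof to be short and the only genuine point requiring care is the measurability/probability-space bookkeeping: one must make sure that $\{w:\sem{\T(w),\phi}\ge t\}$ and $A$ are measurable events under $\nu$ (they are, since they are $\omega$-regular and correspond to runs of the relevant \DPWs, by Theorem~\ref{from abk}), and that conditioning on $A$ is well-defined, which is guaranteed by $\Pr(\psi)>0$. The Boolean-ness of $\psi$ is the load-bearing hypothesis: it is what collapses $1-\sem{\T(w),\psi}$ to a $\{0,1\}$ value and makes the implication behave like a clean disjunction over the partition by $A$. No linearity or MDP machinery is needed here; this lemma is purely a semantic reduction that lets the subsequent construction reuse the threshold and assumption machinery by replacing $\phi$ with $\psi\to\phi$ in the almost-sure component.
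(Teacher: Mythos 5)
Your proposal is correct and follows essentially the same route as the paper's proof: both rest on the two semantic facts that $\sem{\T(w),\psi\to\phi}=\sem{\T(w),\phi}$ when $w\models\psi$ and $\sem{\T(w),\psi\to\phi}=1$ when $w\not\models\psi$ (using the Booleanness of $\psi$ and the $\max\{1-x,y\}$ semantics of implication), and then reduce the equivalence to elementary probability over the partition by $\{w\models\psi\}$. The paper phrases the last step via the law of total probability while you phrase it via complement events, but this is only cosmetic.
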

Using Lemma~\ref{lem: conditional threshold}, we can decide the existence of a transducer $\T$ as we seek, by constructing the \DPW $\A_{{\psi\to \phi},{\ge t}}$ as per Theorem~\ref{from abk}, and keeping only almost-sure winning states as done in Section~\ref{sec:Expected synthesis with threshold}.

We now proceed as in the first approach, by constructing the product $\A_{\ell}\times\ldots \times \A_n\times \A'_{{\psi\to \phi},{\ge t}}\times \A_\psi$, where $\A'_{{\psi\to \phi},{\ge t}}$ is obtained from $\A_{{\psi\to \phi},{\ge t}}$ by keeping only almost-sure winning states. 

\subsection{Handling a Non-Uniform Distribution}
\label{nud}
In order to handle a non-uniform distribution on the input signals, we first have to decide how to model arbitrary distributions on $\tINo$. The common way to do so is to assume that the distribution is generated by a pre-MDP ${\cal D}=\zug{S,s_0,\tOUT,\MDPProb}$ and a labeling function $\iota:S\to \tIN$, where a state $s\in S$ generates the input $\iota(s)$. Thus, the probability of an input signal to hold depends on the history of the interaction with the system. Formally, every run $r=s_0,s_1,...$ of ${\cal D}$ generates an input sequence $\iota(s_1),\iota(s_2)$, and the distribution on runs induces a distribution on $\tINo$. \footnote{Note that we do not consider the label on $s_0$, in order to allow a distribution on the initial letters.}

All our results can be adapted to handle a distribution given by ${\cal D}$ as above. We only have to change the construction of the achievability MDP described in Section~\ref{sec:Expected synthesis} as follows. For a pre-automaton $\B=\zug{\tAP,Q,q_0,\delta}$ and a distribution pre-MDP ${\cal D}=\zug{S,s_0,\tOUT,\MDPProb}$ with labeling function $\iota$, we define the induced pre-MDP as $\M_\B^{\cal D}=\zug{Q\times S,\zug{q_0,s_0},\tOUT,\MDPProb'}$ where for every two states $\zug{q,s},\zug{q',s'} \in Q\times S$ and action $o\in \tOUT$, 
we have 
$\MDPProb'(\zug{q,s},o,\zug{q',s'})=
\MDPProb(s,o,s')$ if $\delta(q,\iota(s')\cup o)=q'$, and $0$ otherwise.
It is not hard to see that all the constructions we apply to achievability MDP $\M_\A$ can be applied to $\M_\A^{\cal D}$, which would take the distribution in ${\cal D}$ into account. The complexity of the algorithms is polynomial in $\M_\A^{\cal D}$. Thus, the complexity of our algorithms remains 2EXPTIME-complete in $\phi$ and polynomial in ${\cal D}$. 

\bibliographystyle{abbrvnat}
\bibliography{ok}

\appendix
\section{Analysis of Example~\ref{xpl:intro hard drive 1}}
\label{apx:example1}
We consider a transducer that replaces the battery in the first station it encounters starting from position $t$, for  $1\le t\le k$. The expected cost of the transducer is then 
\begin{align*}
&(1-p)^k+p\frac{t}{k}+(1-p)p\frac{t+1}{k}+...+(1-p)^{k-t}p\frac{t+(k-t)}{k}\\
&=(1-p)^k+\sum_{i=0}^{k-t}(1-p)^ip\frac{t+i}{k}\\
&=(1-p)^k+\sum_{i=0}^{k-t}(1-p)^ip\frac{t}{k}+\sum_{i=0}^{k-t}(1-p)^ip\frac{i}{k}\\
&=(1-p)^k+p\frac{t}{k}\sum_{i=0}^{k-t}(1-p)^i+\frac{p}{k}\sum_{i=0}^{k-t}(1-p)^ii\\
\end{align*}
\begin{align*}
&=(1-p)^k+p\frac{t}{k}\left(\frac{1-(1-p)^{k-t+1}}{p}\right)+\frac{p(1-p)}{k}\sum_{i=0}^{k-t}(1-p)^{i-1}i\\
&=(1-p)^k+\frac{t}{k}(1-(1-p)^{k-t+1})+\frac{p(1-p)}{k}\left(-\sum_{i=0}^{k-t}(1-p)^{i}\right)'\\
&=(1-p)^k+\frac{t}{k}(1-(1-p)^{k-t+1})+\frac{p(p-1)}{k}\left(\frac{1-(1-p)^{k-t+1}}{p}\right)'\\
\end{align*}
\begin{align*}
&=(1-p)^k+\frac{t}{k}(1-(1-p)^{k-t+1})+\\
+&\frac{p(p-1)}{k}\left(\frac{(k-t+1)(1-p)^{k-t}p-1+(1-p)^{k-t+1}}{p^2}\right)\\
&=(1-p)^k+\frac{t}{k}(1-(1-p)^{k-t+1})+\\
+&\frac{(p-1)}{kp}\left({(k-t+1)(1-p)^{k-t}p-1+(1-p)^{k-t+1}}\right)\\
\end{align*}
One now sees, for example, that if $t=\alpha k$ for $\alpha\in (0,1)$, then the latter expression tends to $\frac{\alpha k}{k}=\alpha$ as $k\to\infty$, as the first and third summands tend to $0$. In particular, for $\alpha=\frac12$, we get an expected satisfaction value of $\frac12$.\qed

\section{Proofs}

\subsection{Full Proof of Theorem~\ref{thm:M to T shqsyn}}
\label{app thm:M to T shqsyn}

We start by proving that if there exists a transducer $\T$ such that $\sem{\T,\phi}_s\ge v$, then there exists a strategy $f$ such that $\val_{\M_\A}(f)\ge v$. To this end, it suffices to prove that $\EE[X_{\T,\phi}]\le \EE[Y_{\T,\phi}]$. Indeed, we can then take $f$ to be the strategy induced by $\T$. 

Consider a random word $w\in \tINo$. it is well-known that for every transducer $\T$, w.p. 1 the run of $\T$ on $w$ reaches an end component and visits all the states of that end component infinitely often (see e.g. \cite{CHJ04a}). Let $U$ be the end component that the run $r$ of $\M_\A$ on $\T(w)$ reaches and for which $\Inf(r)=U$. Let $1\le i\le n$ be such that $v_i=\sem{\T(w),\phi}=X_{\T,\phi}(w)$. Then, $\A_i$ accepts $\T(w)$, and the component $U|_i$ contains a c.w.r. state $q_i$. Indeed, since $\Inf(r)=U$, then the run of $\A_i$ on $w$ visits infinitely often all the states in $U|_i$, implying that the maximal rank in $U|_i$  is even, and that every state that attains this rank in $U|_i$ is a c.w.r. state in $\M_{\A_i}$. Thus, by construction, all the states in $U$ have reward at least $v_i$ (it may be the case that $U$ is contained in another end component with a higher-value c.w.r.). Thus, $Y_{\T,\phi}(w)\ge v_i$. Since our assumption on $\T(w)$ reaching an end component holds w.p. 1, it follows that $\Pr(X_{\T,\phi}\le Y_{\T,\phi})=1$. By taking expectation, we conclude that $\EE[X_{\T,\phi}]\le \val_{\M_\A}(\T)=\EE[Y_{\T,\phi}]$.

For the converse implication, consider a strategy $f$ in $\M_\A$ such that $\val_{\M_\A}(f)\ge v$. 
By \cite{FV96}, we can assume that $f$ is memoryless. Let $\T=\T[\M_\A,f]$ be the transducer induced by $f$. We show that there exists a transducer $\T'$ such that $\EE[X_{\T',\phi}]= \EE[Y_{\T,\phi}]$, thus concluding the claim.

By~\cite{CHJ04a}, if $q$ is a c.w.r. state in an MDP $\M$, and $U$ is a witness for $q$, then there exists a memoryless strategy $g$ such that for every state $q'\in U$, w.p. 1 the run $r$ of $\M$ from $q'$ visits $q$ infinitely often and stays forever in $U$.

We obtain $\T'$ as follows. Once the run of $\M_\A$ with $\T$ reaches an end component $U$, if the states in $U$ have value $v_i$ for some $1\le i\le n$, then $\T$ starts playing the memoryless strategy mentioned above to visit a state $s=\zug{q_1,...,q_n}\in U$ such that $q_i$ is a c.w.r. in $\M_{\A_i}$. Such a state must exist by the construction of $\M$.

Note that the runs of $\T$ and $\T'$ only differ after reaching an end component, in which case while the runs may differ, the values do not differ, as all the states in an end component have the same value. Thus, $\EE[Y_{\T,\phi}]=\EE[Y_{\T',\phi}]$. 

Observe now that once $\T'$ reaches an end component $U$ as above, then w.p. 1 the run visits $q_i$ infinitely often, and is therefore accepting in $\A_i$, implying that $\sem{\T(w),\phi} \geq v_i$. By the construction of $\M_\A$, we have that $v_i$ is the maximal value for which there exists a c.w.r. state in (a projection to the $\A_i$ automata on) $U$. Thus, $\Pr(Y_{\T',\phi}= X_{\T',\phi})=1$. We conclude that $\EE[Y_{\T,\phi}]=\EE[Y_{\T',\phi}]= \EE[X_{\T',\phi}]$, and we are done. 

Finally, it is easy to see that finding the c.w.r states and constructing $\T'$ from $f$ can be done in polynomial time. The first involves finding the winner in parity-MDPs,
and the second follows from the fact that finding the strategies $g$ above can be done in polynomial time~\cite{CHJ04a}. Then, $\T'=\T[\M_\A,f']$, where $f'$ is the strategy that plays $f$ until reaching an end component, and then plays $g$, as described above.

\subsection{Proof of Theorem~\ref{thm:M to T shqsyn with threshold}}
\label{apx:M to T shqsyn with threshold}
For the first direction, namely constructing $f$ given $\T$, the proof is analogous to that of Theorem~\ref{thm:M to T shqsyn}, keeping in mind that $\sem{\T,\phi}_a\ge t$ implies that 
w.p. 1 a run of $\M_\A$ with the strategy $\T$ reaches an end component that contains a c.w.r. state. Indeed, the assumption implies that w.p. 1 the component $\A'_{\ge t}$ accepts $\T(w)$, which means that w.p. 1 at least one of the automata $\A_\ell,...,\A_{n}$ accepts $\T(w)$, so the end component that is eventually reached (also w.p. 1) has a c.w.r. state. The rest of the analysis follows the proof of Theorem~\ref{thm:M to T shqsyn}.

For the other direction, consider a memoryless strategy $f$ in $\M_\A$. Assume that there exists $\epsilon>0$ such that the run of $\M_\A$ with $f$ reaches an end component $U$ that does not have a c.w.r. state w.p. $\epsilon>0$. By the construction of $\M_\A$, all the states in $U$ have reward $0$. Thus, changing the behavior of $\T$ from the states in $U$ cannot decrease the expected value. Furthermore, by the construction of $\A'$, the projection of $U$ on $\A'_{\ge t}$ consists only of states from which there is a strategy that wins w.p. 1 in the parity-MDP $\M_{\ge t}$. Thus, we can modify $f$ to play such a strategy from $U$ while not decreasing the expected value, but guaranteeing that $f$ reaches  w.p. 1 an end component that contains a c.w.r. state. From here, we obtain $\T'$ similarly to the proof of Theorem~\ref{thm:M to T shqsyn}. As there, the strategy can be memoryless and be found in polynomial time. 
\qed

\subsection{Proof of Lemma~\ref{lem:condition on union with 0}}
\label{apx:expectation}
Recall that the probabilistic distribution function of $X$ given an event $C$ with $\Pr(C)>0$ is a function $f_{X|C}$ that satisfies for all $D\subseteq \RR$ 
$$\int_{D}f_{X|C}(x)dx=\Pr(D|C).$$

It is easy to show that since $\Pr(B)=0$, then for every event $D$ it holds that $\Pr(D|A\cup B)=\Pr(D|A)$. Thus, the former condition implies that $f_{X|A\cup B}\equiv f_{X|A}$, and finally
$$\EE[X|A\cup B]=\int_\RR xf_{X|A\cup B}=\int_\RR xf_{X|A}=\EE[X|A].$$
\qed

\subsection{Proof of Lemma~\ref{lem:M' to M}}
\label{apx:M' to M}
We construct $f$ to agree with $g$ on every state not in $\baderg$. On states in $\baderg$, we set $f$ to behave arbitrarily. We claim that $\EE[R_{\M_\A,f}|w\models\psi]=\val_{{\M'_\A}}(g)$. 

Let $B \subseteq  \tINo$ be the event such that $w \in B$ iff the run of $\A_\psi$ on $w$ reaches a rejecting ergodic component. Equivalently, this is set of words for which a run of $\M_\A$ reaches $\baderg$.  
Recall that $\val_{{\M'_\A}}(g)=\EE[R_{{\M'_\A},g}]$. 
By the law of total expectation we get
$\EE[R_{{\M'_\A},g}]=\EE[R_{{\M'_\A},g}|B]\cdot\Pr(B)+\EE[R_{{\M'_\A},g}|\overline{B}]\cdot\Pr(\overline{B}).$
Since $g$ is memoryless, and since visiting $B$ in ${\M'_\A}$ implies a reset to $s_0$, we get that $\EE[R_{{\M'_\A},g}|B]=\EE[R_{{\M'_\A},g}]$. Thus, the above becomes
$\EE[R_{{\M'_\A},g}]=\EE[R_{{\M'_\A},g}]\cdot\Pr(B)+\EE[R_{{\M'_\A},g}|\overline{B}]\cdot(1-\Pr(\overline{B})).$
Rearranging and dividing by $(1-\Pr(B))$, which is nonzero since we assume $\Pr(\psi)<1$, we get that $\EE[R_{{\M'_\A},g}]=\EE[R_{{\M'_\A},g}|\overline{B}]$.

Next, we observe that given $\overline{B}$, the behavior of $f$ and $g$ is identical, since the reset states are never reached. Thus, we get $\EE[R_{{\M'_\A},g}]=\EE[R_{{\M'_\A},g}|\overline{B}]=\EE[R_{\M_\A,f}|\overline{B}]$.

We now partition the event $\{w: w\models \psi\}$ to
$(\set{w: w\models \psi}\cap B)\cup (\set{w: w\models \psi}\cap \overline{B})$. Observe that, by definition, $\Pr(\set{w: w\models \psi}\cap B)=0$. Therefore, by Lemma~\ref{lem:condition on union with 0}, we get that $\EE[R_{\M_\A,f}|w\models\psi]=\EE[R_{\M_\A,f}|w\models\psi\cap \overline{B}]$. Similarly, we partition the event $\overline{B}$ to $\overline{B}=(\set{w: w\models \psi}\cap \overline{B}) \cup (\set{w: w\not\models \psi}\cap \overline{B})$. Observe that $\Pr(\set{w: w\not\models \psi}\cap \overline{B})=0$. Indeed, given that a computation does not reach $\baderg$, it reaches w.p. 1 an ergodic state from which $\psi$ is satisfied w.p. 1, and therefore the computation satisfies $\psi$ w.p. 1. Again, using Lemma~\ref{lem:condition on union with 0} and the above observation, we get
$\EE[R_{\M_\A,f}|\overline{B}]=\EE[R_{\M_\A,f}|w\models\psi\cap \overline{B}]=\EE[R_{\M_\A,f}|w\models\psi]$,
implying that $\EE[R_{{\M'_\A},g}]= \EE[R_{\M_\A,f}|\overline{B}] = \EE[R_{\M_\A,f}|w\models\psi]$.
\qed

\subsection{Proof of Lemma~\ref{lem:M to M'}}
\label{apx:M to M'}
We construct $g$ to behave as follows. As long as $\baderg$ is not reached, $g$ behaves as $f$. By the definition of ${\M'_\A}$, once a state in $\baderg$ is reached, the next step resets to $s_0$. We then let $g$ ``start over'' and again behave as $f$ does on an empty history.
We claim that $\EE[R_{{\M'_\A},g}]=\EE[R_{\M_\A,f}|w\models\psi]$. The proof is similar to that of Lemma~\ref{lem:M' to M}.

Consider the event $B$ as in the proof of Lemma~\ref{lem:M' to M}.
Note that since $g$ resets whenever $\baderg$ is reached, we have that $\EE[R_{{\M'_\A},g}]=\EE[R_{{\M'_\A},g}|B]$. Since
$\EE[R_{{\M'_\A},g}]=\EE[R_{{\M'_\A},g}|B]\cdot\Pr(B)+ \EE[R_{{\M'_\A},g}|\overline{B}]\cdot\Pr(\overline{B}),$
we rearrange and get $\EE[R_{{\M'_\A},g}]= \EE[R_{{\M'_\A},g}|\overline{B}]$.
Again, as in the proof of Lemma~\ref{lem:M' to M}, we have that $\EE[R_{\M_\A,f}|w\models \psi]=\EE[R_{\M_\A,f}|\overline{B}]$.
Finally, since $f$ and $g$ coincide given $\overline{B}$, we conclude that
$\EE[R_{{\M'_\A},g}]=\EE[R_{{\M'_\A},g}|B]=\EE[R_{\M_\A,f}|\overline{B}]=\EE[R_{\M_\A,f}|w\models \psi]$ and we are done.
\qed

\subsection{Proof of Theorem~\ref{thm:solving expected synthesis conditional}}
\label{apx:solving expected synthesis conditional}
Consider an $\FLTL$ specification $\phi$ over $I\cup O$ and an $\LTL$ assumption $\psi$ over $I$. 
By Theorem~\ref{thm:M to T shqsyn conditional}, it is enough to find a memoryless strategy $f$ in $\M_\A$ that maximizes $\EE[R_{\M_\A,f}|w\models \psi]$. Consider a memoryless strategy $g$ that maximizes $\val_{\M'_\A}(g)$. By Lemma~\ref{lem:M' to M}, we can compute in polynomial time in $\M'_\A$ a memoryless strategy $f$ such that $\EE[R_{\M_\A,f}|w\models \psi]\ge \val_{\M'_\A}(g)$. By Lemma~\ref{lem:M to M'}, the strategy $f$ maximizes $\EE[R_{\M_\A,f}|w\models \psi]$, as otherwise $g$ does not attain the maximal value in $\M'_\A$.
Thus, it is enough to find a maximizing memoryless strategy in $\M'_\A$, which can be done in doubly-exponential time.

The lower bound trivially follows from Theorem~\ref{thm:M to T shqsyn}.

\qed

\subsection{Proof of Lemma~\ref{lem: conditional threshold}}
\label{apx:conditional threshold}
By the law of total probability, we can write
\begin{align*}
&\Pr(\sem{\T(w),\psi\to \phi}\ge t)=\\
&=\Pr(\sem{\T(w),\psi\to \phi}\ge t|w\models \psi) \cdot \Pr(w\models\psi)+\\
&+\Pr(\sem{\T(w),\psi\to \phi}\ge t|w\models \neg\psi) \cdot (1-\Pr(w\models\psi))
\end{align*}
Given that $\psi$ holds, we have that $\sem{\T(w),\psi\to \phi}=\sem{\T(w),\phi}$. Therefore, 
$\Pr(\sem{\T(w),\psi\to \phi}\ge t|w\models \psi)=\Pr(\sem{\T(w),\phi}\ge t)$.

Given that $\psi$ does not hold, we have that $\sem{\T(w),\psi\to \phi}=1$. Therefore, $\Pr(\sem{\T(w),\psi\to \phi}\ge t|w\models \neg\psi)=1$.

Accordingly, $\Pr(\sem{\T(w),\psi\to \phi}\ge t)=1$ iff 
$\Pr(\sem{\T(w),\phi}\ge t) \cdot \Pr(w\models\psi)+1\cdot(1-\Pr(w\models\psi))=1$. Assuming $0<\Pr(\psi)<1$, the latter is equivalent to  $\Pr(\sem{\T(w),\phi}\ge t)=1$, and we are done.
\qed

\end{document}